\documentclass[journal]{IEEEtran}

\usepackage{amsmath,amstext,amsfonts,amssymb,eucal,graphicx}
\usepackage{subfigure}
\usepackage{epsfig}
\usepackage{listings}

%
%

%

%

%

%

%
%

\newtheorem{proposition}{Proposition}
\newtheorem{theorem}{Theorem}
\newtheorem{definition}{Definition}


\newcommand{\E}{\mathbb{E}}

\begin{document}
\bibliographystyle{IEEEtran}
\title{Finite Dimensional Statistical Inference}
\author{
  \IEEEauthorblockN{\O yvind
  Ryan,~\IEEEmembership{Member,~IEEE}, Antonia Masucci, Sheng Yang,~\IEEEmembership{Member,~IEEE}, and M{\'e}rouane~Debbah,~\IEEEmembership{Senior Member,~IEEE}\\}
  \thanks{This work was supported by Alcatel-Lucent within the Alcatel-Lucent Chair on flexible radio at SUPELEC}
  \thanks{This paper was presented in part at the International Conference on Ultra Modern Telecommunications, 2009, St. Petersburg, Russia}
  \thanks{Antonia Masucci is with SUPELEC, Gif-sur-Yvette, France, antonia.masucci@supelec.fr}
  \thanks{\O yvind~Ryan is with the Centre of Mathematics for Applications, University of Oslo, P.O. Box 1053 Blindern, NO-0316 Oslo, Norway, and with SUPELEC, Gif-sur-Yvette, France, oyvindry@ifi.uio.no}
  \thanks{Sheng Yang is with SUPELEC, Gif-sur-Yvette, France, sheng.yang@supelec.fr}
  \thanks{M{\'e}rouane~Debbah is with SUPELEC, Gif-sur-Yvette, France, merouane.debbah@supelec.fr}
}

\markboth{IEEE Transactions on Information Theory,submitted,~December~2009}{Shell \MakeLowercase{\textit{et al.}}: Bare
Demo of IEEEtran.cls for Journals}

\maketitle
\begin{abstract}
In this paper, we derive the explicit series expansion of the
eigenvalue distribution of various models, namely the case of
non-central Wishart distributions, as well as correlated zero mean Wishart distributions.
The tools used extend those of the free probability framework, which have been quite successful for high dimensional statistical inference
(when the size of the matrices tends to infinity), also known
as free deconvolution. This contribution focuses on the finite
Gaussian case and proposes algorithmic methods to compute the
moments. Cases where asymptotic results fail to apply are also discussed.
\end{abstract}

\begin{keywords}
Gaussian matrices, Random Matrices, convolution, limiting
eigenvalue distribution.
\end{keywords}

\section{Introduction}
Random matrix and free probability theory have fruitful applications in many fields of research,
such as digital communication~\cite{paper:telatar99}, mathematical finance~\cite{book:bouchaud} and nuclear physics~\cite{paper:guhr}.
In particular, the free probability framework~\cite{vo2,paper:vomult,vo6,vo7,book:hiaipetz} can be used
for high dimensional statistical inference (or free deconvolution), i.e., to retrieve the eigenvalue distributions of involved functionals of random matrices.
The general idea of deconvolution is related to the following problem~\cite{Florent}:

Given ${\bf A}$, ${\bf B}$ two $n\times n$ independent square Hermitian (or symmetric)
random matrices:\\
1) Can one derive the  eigenvalue distribution of ${\bf A}$ from the ones of ${\bf A} + {\bf B}$ and ${\bf B}$?
If feasible in the large $n$-limit, this operation is named additive free deconvolution,\\
2) Can one derive the eigenvalue distribution of ${\bf A}$ from the ones
of ${\bf AB}$ and ${\bf B}$? If feasible in the large $n$-limit, this operation is named multiplicative free deconvolution.

In the literature, deconvolution  for the large $n$-limit
has been studied, and the methods generally used to compute it are the method of moments~\cite{vo2}, and the Stieltjes transform method~\cite{paper:doziersilverstein1}.
The expressions turn out to be quite simple if some kind of asymptotic freeness~\cite{book:hiaipetz} of the matrices involved is assumed.
However, freeness usually does not hold for finite matrices.
Quite remarkably, the method of moments can still be used to propose an algorithmic method
to compute these operations. The goal of this contribution is exactly to propose a general finite dimensional statistical inference framework
based on the method of moments, which is implemented in software.
As the calculations are quite tedious, and for sake of clarity, we focus in this contribution on
Gaussian matrices\footnote{Cases such as Vandermonde matrices can also
be implemented in the same vein~\cite{ryandebbah:vandermonde1,ryandebbah:vandermonde2}. The
general case is, however, more difficult.}.

The method of moments~\cite{Florent} is based on the relations between the moments of the different matrices involved.
It provides a series expansion of the eigenvalue distribution of the involved matrices.
For a given $n\times n$ random matrix ${\bf A}$, the $p$-th moment of ${\bf A}$ is defined as
\begin{equation} \label{moment}
  t_{{\bf A}}^{n,p} = \E\left[ \mathrm{tr}({\bf A}^p) \right]=\int \lambda^pd\rho_n(\lambda)
\end{equation}
where $\E$ is the expectation, $\mathrm{tr}$ the normalized trace,
and $d\rho_n$ the associated empirical mean measure defined by $d\rho_n(\lambda)=\E\left(\frac{1}{n} \sum_{i=1}^n \delta(\lambda-\lambda_i)\right)$,
where $\lambda_i$ are the eigenvalues of ${\bf A}$.
Quite remarkably, when $n\to \infty$, $t_{{\bf A}}^{n,p}$ converges in many cases almost surely to an analytical expression $t_{{\bf A}}^{p}$
that depends only on some specific parameters of ${\bf A}$
(such as the distribution of its entries)\footnote{Note that in the following, when speaking of moments of matrices, we refer to the moments of the associated measure.}.
This enables to reduce the dimensionality of the problem and  simplifies the computation of convolution of measures.
In recent works deconvolution has been analyzed when $n\to \infty$ for some particular matrices ${\bf A}$ and ${\bf B}$,
such as when  ${\bf A}$ and ${\bf B}$ are free~\cite{eurecom:freedeconvinftheory},
or ${\bf A}$ random Vandermonde and ${\bf B}$ diagonal~\cite{ryandebbah:vandermonde1,ryandebbah:vandermonde2}.

The inference framework  described in this contribution is based on the method of moments in the finite case: 
it takes a set of moments as input, and produces a set of moments as output, with the dimensions of the matrices considered finite.
The framework is flexible enough to allow for repeated combinations of the random matrices we consider,
and the patterns in such combinations are reflected nicely in the algorithms.
The framework also lends itself naturally to combinations with other types of random matrices,
for which support has already been implemented in the framework~\cite{ryandebbah:vandermonde2}.
This flexibility, exploited with the method of moments, is somewhat in contrast to methods such as the Stieltjes transform method~\cite{paper:doziersilverstein1},
where combining patterns of matrices naturally leads to more complex equations for the Stieltjes transforms (when possible) 
and can only be performed in the large $n$-limit. 
While the simplest patterns we consider are sums and products,
we also consider products of many independent matrices.
The algorithms are based on iterations through partitions and permutations as in~\cite{paper:haagerupthorbjornsen1},
where the case of a Wishart matrix was considered.
Our methods build heavily on the simple form which the moments of complex Gaussian random variables have, as exploited in~\cite{paper:haagerupthorbjornsen1}.
We remark that, in certain cases, it is possible to implement the method of moments in a different way also~\cite{haagerup98random,paper:tucci1}.
However, we are not aware of any attempts to make an inference framework as general as the one presented here. 
The case presented in~\cite{paper:tucci1}, for instance, handles only certain zero-mean, one-sided correlated Wishart matrices.

The paper is organized as follows.
Section~\ref{section:essentials} provides background essentials on random matrix theory and combinatorics needed to state the main results.
Parts of Section~\ref{section:essentials} is rather technical, 
but it is not necessary to understand all details therein to understand the statement of the main results. 
These are summarized in Section~\ref{section:theorems}.
First, algorithms for the simplest patterns (sums and products of random matrices) in the finite dimensional statistical inference framework are presented.
Then, recursive algorithms for products of many Wishart matrices and a deterministic matrix are included, as well with some general remarks on how the general situation
can be attacked from these basic algorithms.
We then explain how algorithms for deconvolution can be obtained within the same framework, and formalize the corresponding moment estimators.
Section~\ref{software} presents details on the software implementation of the finite dimensional statistical inference framework.
Section~\ref{simulations} presents some simulations and useful applications showing the implications of the presented results in various applied fields.

\section{Random matrix Background Essentials} \label{section:essentials}
In the following, upper boldface symbols will be used for matrices,
whereas lower symbols will represent scalar values.
$(.)^T$ will denote the transpose operator, $(.)^\star$
conjugation, and $(.)^H=\left((.)^T\right)^\star$ hermitian
transpose. ${\bf I}_n$ will represent the $n\times n$ identity matrix.
We let $\mathrm{Tr}$ be the (non-normalized) trace for square matrices, defined by,
\[
  \mathrm{Tr}({\bf A}) = \sum_{i=1}^n a_{ii},
\]
where $a_{ii}$ are the diagonal elements of the $n\times n$ matrix ${\bf A}$.
We also let $\mathrm{tr}$ be the normalized trace, defined by $\mathrm{tr}({\bf A}) = \frac{1}{n}\mathrm{Tr}({\bf A})$.
When ${\bf A}$ is non-random, there is of course no need to take the expectation in (\ref{moment}).
${\bf D}$ will in general be used to denote such non-random matrices,
and if ${\bf D}_1,\ldots,{\bf D}_r$ are such matrices, we will write
\begin{equation} \label{alphadef}
  D_{i_1,\ldots,i_s} = \mathrm{tr}\left( {\bf D}_{i_1}\cdots {\bf D}_{i_s}\right),
\end{equation}
whenever $1\leq i_1,\ldots,i_s\leq r$. (\ref{alphadef}) are also called {\em mixed moments}.

To state the results of this paper, random matrix concepts will be combined
with concepts from partition theory. ${\cal P}(n)$ will denote the partitions of
$\{1,\ldots,n\}$. For a partition $\rho=\{ W_1,\ldots,W_r\}\in{\cal P}(n)$, $W_1,\ldots,W_r$ denote its blocks,
while $|\rho|=r$ denotes the number of blocks.
We will write $k\sim_{\rho}l$ when $k$ and $l$ belong to the same block of $\rho$.
Partition notation is adapted to mixed moments in the following way:
\begin{definition} \label{ddef2}
  For $\rho = \{ W_1,\ldots,W_k \}$, with $W_i = \{ w_{i1},\ldots,w_{i|W_i|} \}$,
  we define
  \begin{align}
    D_{W_i}  &= D_{i_{w_{i1}},\ldots,i_{w_{i|W_i|}}} \label{dblockdef} \\
    D_{\rho} &= \prod_{i=1}^k D_{W_i}. \label{dpartdef}
  \end{align}
\end{definition}

With the empirical eigenvalue distribution of a hermitian random matrix ${\bf A}$, we mean the (random) function
\begin{equation} \label{edfdef}
  F_{ {\bf A} }(\lambda) = \frac{\#\{ i | \lambda_i \leq \lambda \}}{n} ,
\end{equation}
where $\lambda_i$ are the (random) eigenvalues of ${\bf A}$.
In many cases, the moments determine the distribution of the eigenvalues~\cite{book:baisilverstein}.
Due to the expectation in (\ref{moment}), the results in this paper thus apply to the mean eigenvalue distribution of certain random matrices.

In the following, we will denote a standard complex Gaussian matrix by ${\bf X}$. 
Standard complex means that the matrix
has i.i.d. complex Gaussian entries, with zero mean and unit variance (in particular, the real and imaginary parts of the entries are independent,
each with mean $0$, variance $\frac{1}{2}$).
${\bf X}$ will sometimes also be used to denote a standard selfadjoint Gaussian matrix, standard selfadjoint meaning that
it has i.i.d. entries only above or on the main diagonal,
with the real and imaginary parts independent with variance $\frac{1}{2}$~\cite{book:hiaipetz}.
The matrix sizes in the following will be denoted $n\times N$ for rectangular matrices, $n\times n$ for square matrices.
All random matrices we consider will be using selfadjoint or complex Gaussian matrices as building blocks.

\subsection{The diagrammatic method}
Some schools of science learn methods for computing the moments of Gaussian matrices from diagrams. As an example of what we mean by this, we have in
Figure~\ref{fig:momwishart} demonstrated how the second moment of a Wishart matrix 
$\frac{1}{N}{\bf X}{\bf X}^H$ can be found in this way. 
\begin{figure}
\begin{center}
  \subfigure[$2,3$ identified, $4,1$ also]{\epsfig{figure=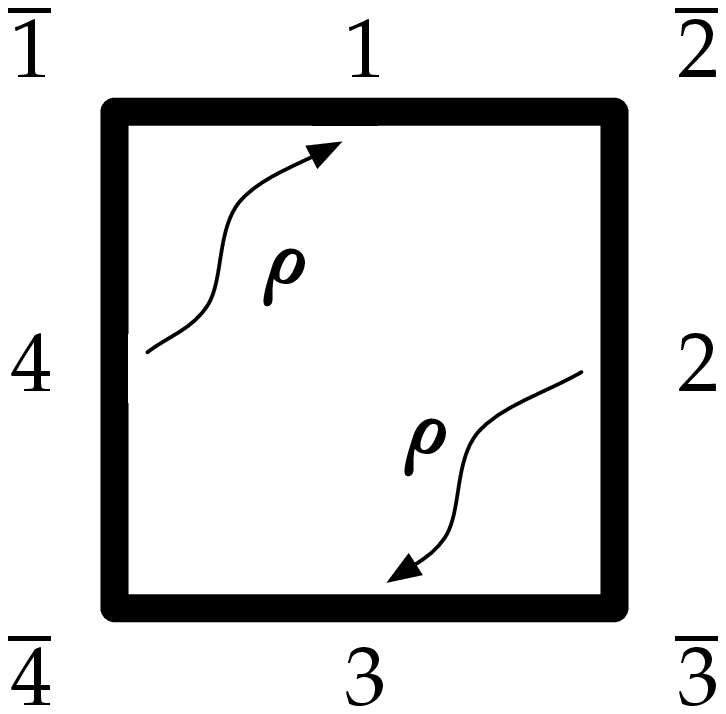,width=0.24\columnwidth}}
  \subfigure[$2,1$ identified, $4,3$ also]{\epsfig{figure=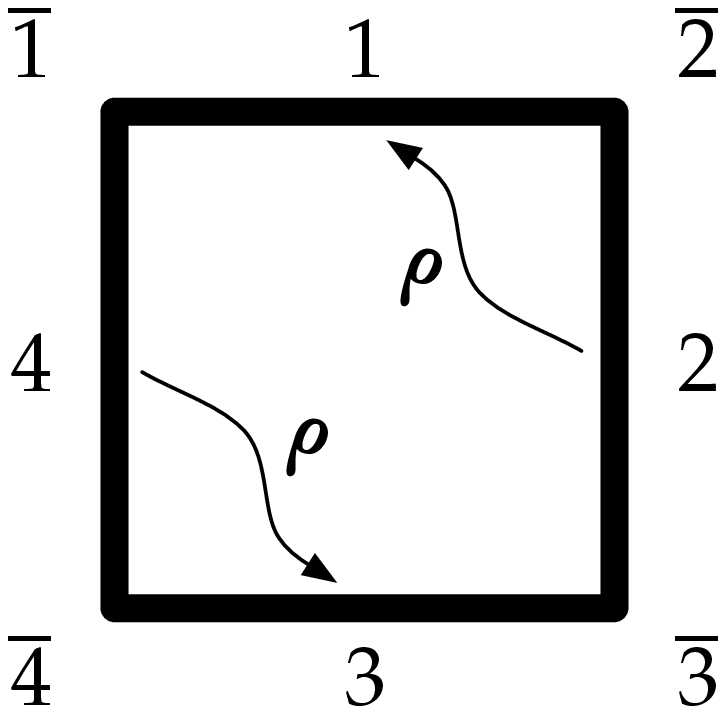,width=0.24\columnwidth}}
  
  \subfigure[Graph resulting from (a).]{\epsfig{figure=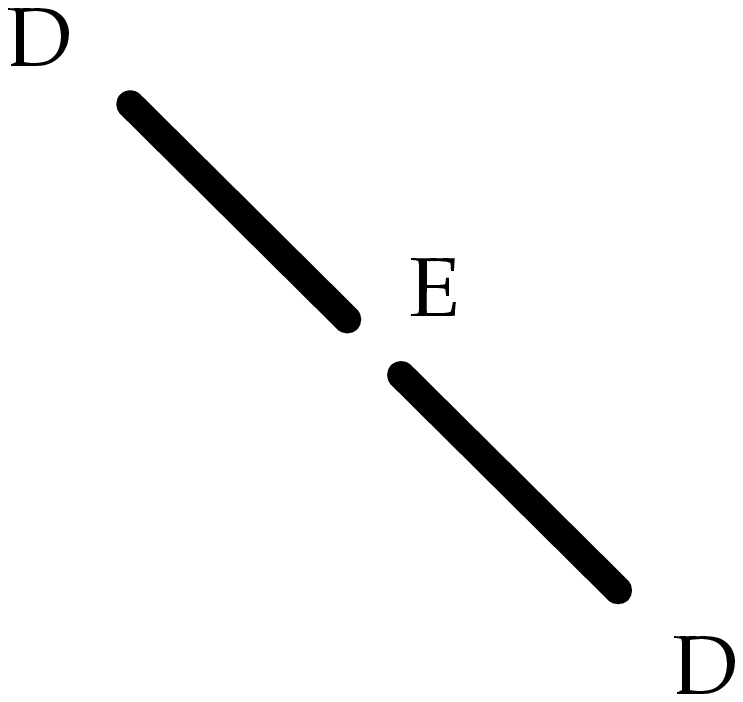,width=0.24\columnwidth}}
  \subfigure[Graph resulting from (b).]{\epsfig{figure=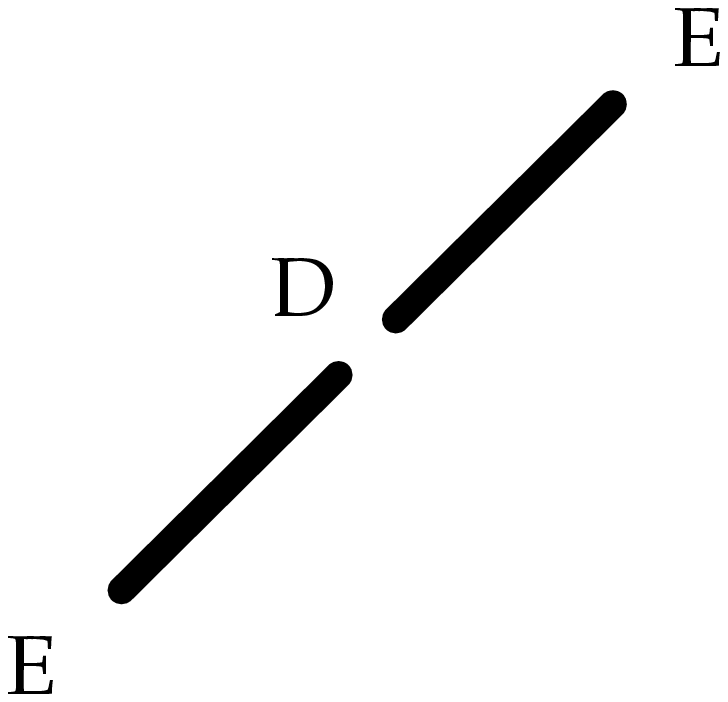,width=0.24\columnwidth}}
\caption{Diagrams demonstrating how the second moment of a Wishart matrix can be computed. 
         In all possible ways, even-labeled edges are identified with odd-labeled edges. 
         There are two possibilities, shown in (a) and (b). 
         The resulting graphs after identifications are shown in (c) and (d). 
         The second moment is constructed by summing contributions from all such possible identifications (here there are only $2$). 
         The contribution for any identification depends only on $n,N$, and the number of even-labeled and odd-labeled vertices in the resulting graphs (c) and (d).
         We will write down these contributions later.
         The labels $D$ and $E$ are included since we later on will generalize to compute the moments of doubly correlated Wishart matrices, 
         where the correlation matrices are denoted ${\bf D}$ and ${\bf E}$. 
         } \label{fig:momwishart}
\end{center}     
\end{figure}
In Figure~\ref{fig:momsamplcov} we have similarly demonstrated how the second moment of a matrix on the 
form $({\bf D}+{\bf X})({\bf E}+{\bf X})^H$ can be found, where ${\bf D}$ and ${\bf E}$ are independent from ${\bf X}$. 
\begin{figure}
  \subfigure[$4,1$ identified]{\epsfig{figure=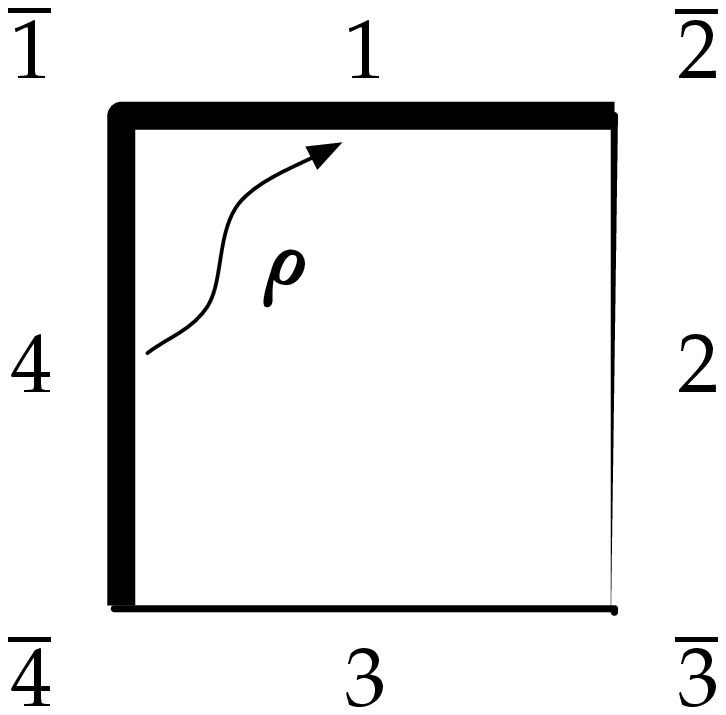,width=0.24\columnwidth}}
  \subfigure[$4,3$ identified]{\epsfig{figure=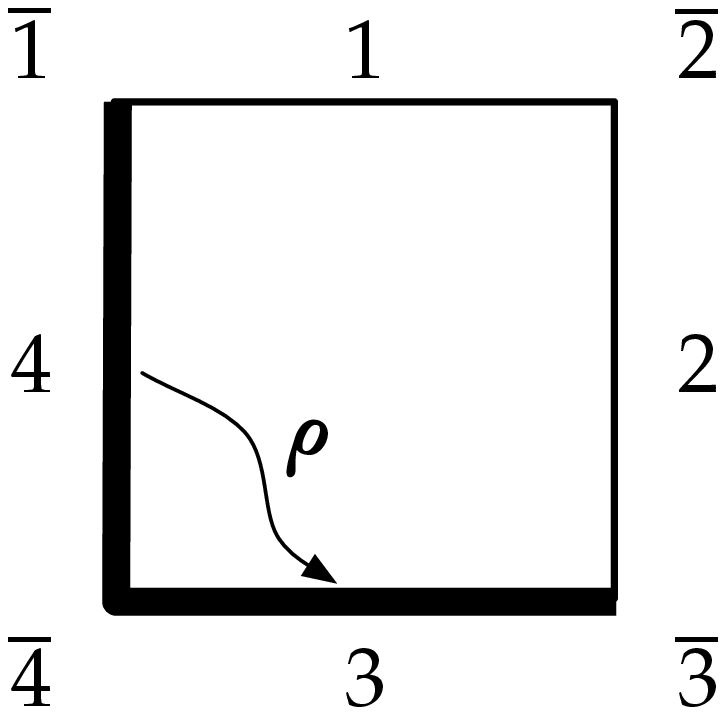,width=0.24\columnwidth}}
  \subfigure[$2,1$ identified]{\epsfig{figure=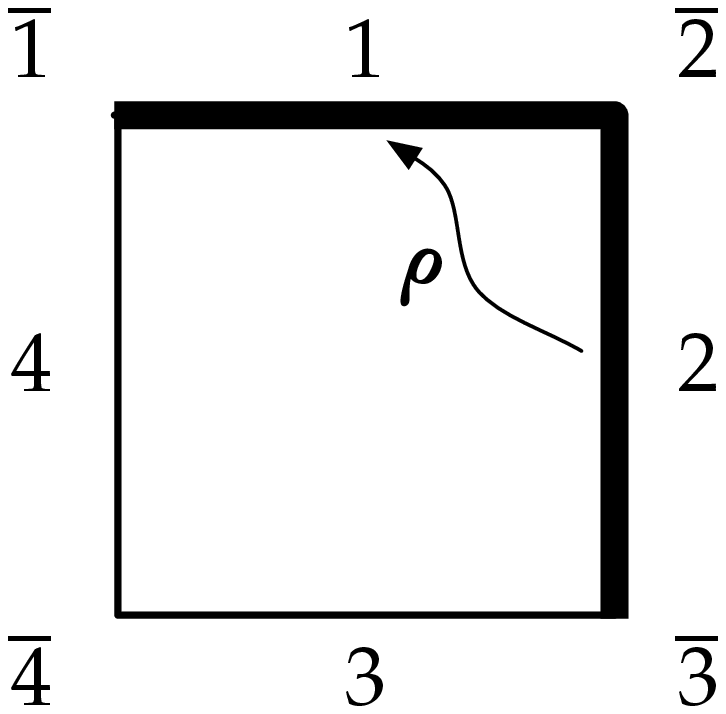,width=0.24\columnwidth}}
  \subfigure[$2,3$ identified]{\epsfig{figure=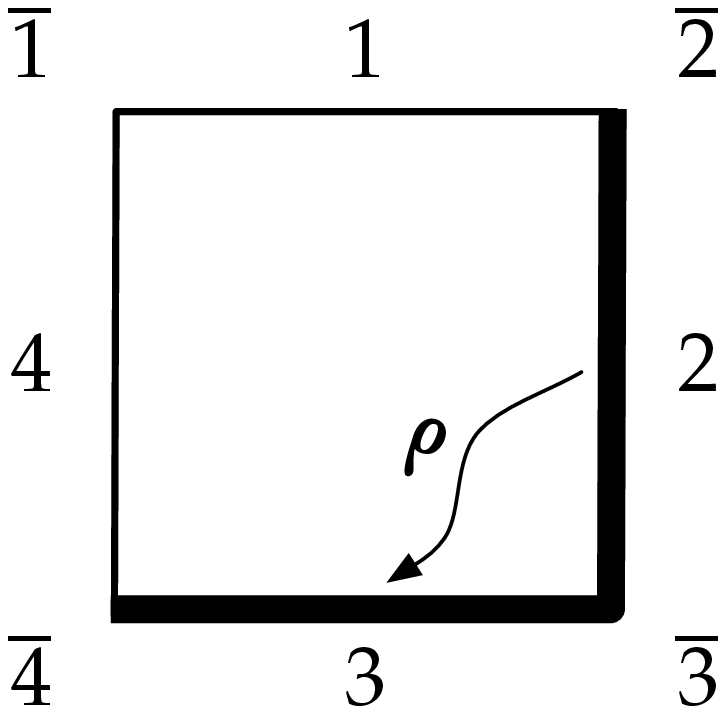,width=0.24\columnwidth}}
  
  \subfigure[Graph resulting from (a).]{\epsfig{figure=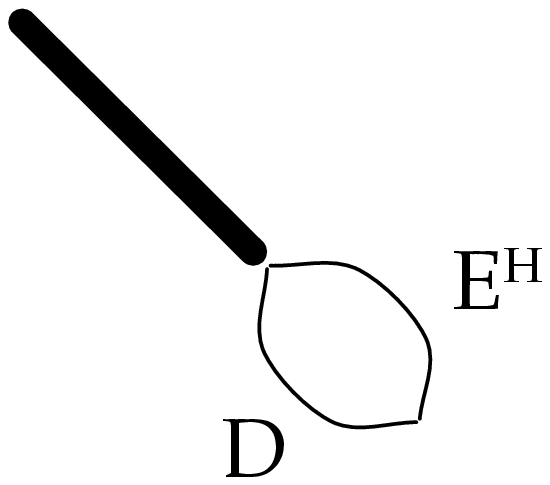,width=0.24\columnwidth}}
  \subfigure[Graph resulting from (b).]{\epsfig{figure=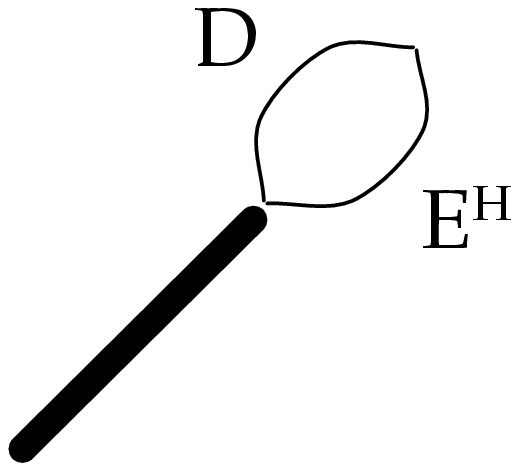,width=0.24\columnwidth}}
  \subfigure[Graph resulting from (c).]{\epsfig{figure=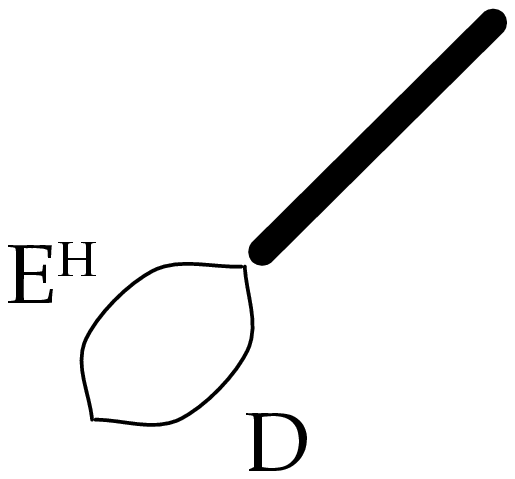,width=0.24\columnwidth}}
  \subfigure[Graph resulting from (d).]{\epsfig{figure=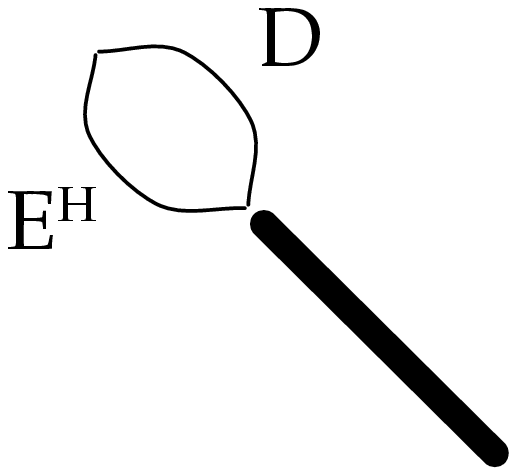,width=0.24\columnwidth}}
\caption{Diagrams demonstrating how the second moment of $({\bf D}+{\bf X})({\bf E}+{\bf X})^H$ can be computed. 
         As in Figure~\ref{fig:momwishart}, even-labeled and odd-labeled edges are identified in all possible ways, 
         but this time we also perform identifications of subsets of edges (edges not being identified correspond to choices from ${\bf D}$ and ${\bf E}^H$).
         In addition to the identifications in Figure~\ref{fig:momwishart}, we thus also have the ones in (a)-(d), 
         where only half of the edges are identified. We also have the case where there are no identifications at all. 
         The second moment is constructed by summing contributions from all such possible "partial" identifications, and
         the contribution for any identification is computed similarly as with Figure~\ref{fig:momwishart}.
         } \label{fig:momsamplcov}
\end{figure}
This matrix form is much used when combining many observations of a random vector.
While these two figures assume a complex Gaussian matrix, Figure~\ref{momselfadj} explains how the diagrammatic method can be modified to compute
the second moment of ${\bf R}+{\bf X}$, where ${\bf X}$ is selfadjoint, Gaussian, and ${\bf R}$ is independent from it. 
\begin{figure}
\begin{center}
  \subfigure[$2,4$ identified]{\epsfig{figure=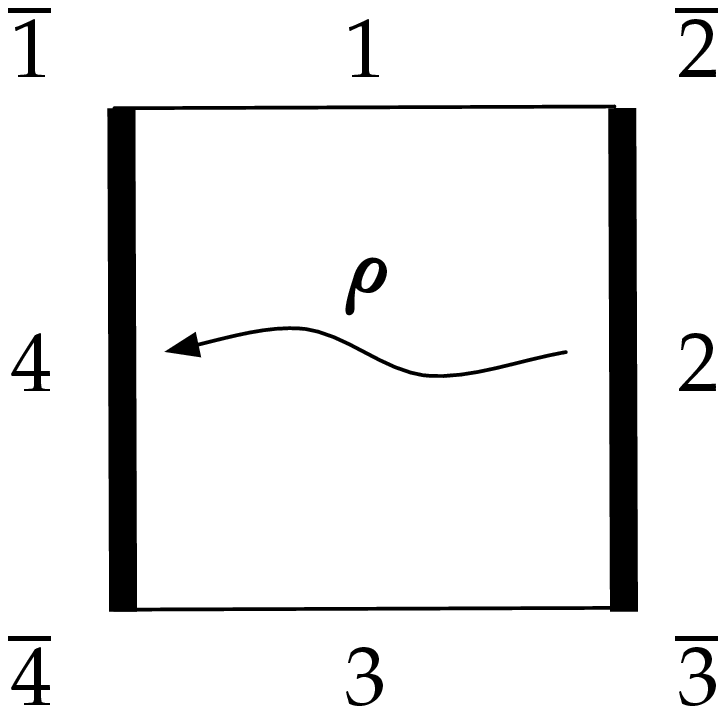,width=0.24\columnwidth}}
  \subfigure[$2,4$ identified, $1,3$ also]{\epsfig{figure=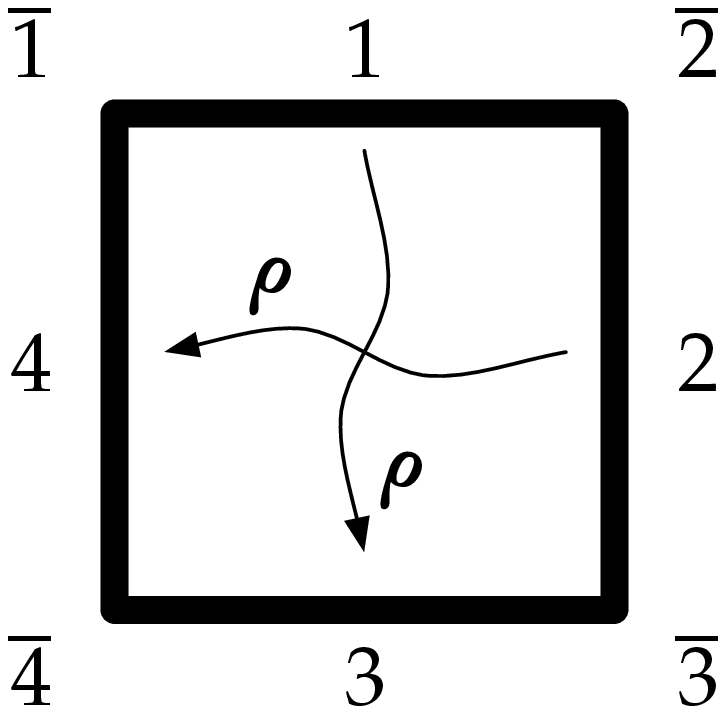,width=0.24\columnwidth}}
  
  \subfigure[Graph resulting from (a).]{\epsfig{figure=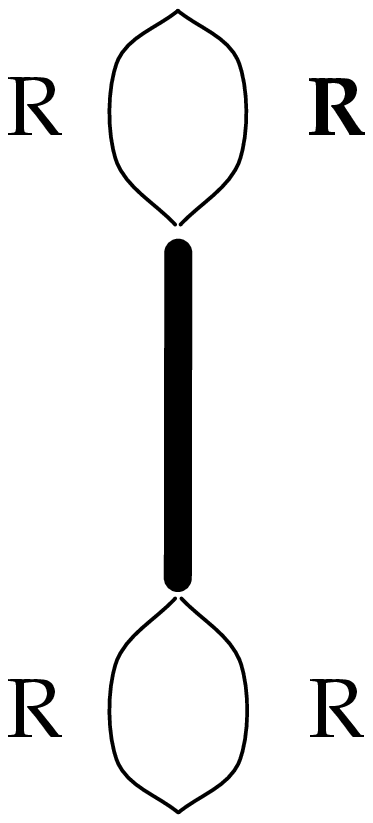,width=0.15\columnwidth}}
  \subfigure[Graph resulting from (b).]{\epsfig{figure=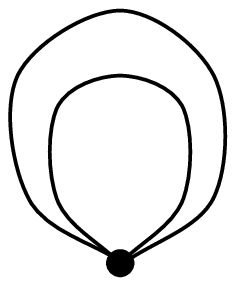,width=0.24\columnwidth}}
  \caption{Diagrams demonstrating how the second order moment of ${\bf R}+{\bf X}$ can be found, with ${\bf X}$ a selfadjoint, Gaussian matrix.
           As in Figures~\ref{fig:momwishart} and~\ref{fig:momsamplcov}, all possible identifications of edges are considered, but irrespective of whether they are even-odd pairings. 
           In particular, all identifications from these figures are considered, with the arrows allowed to go any way. 
           In addition, we also get identifications like in (a) and (b), where odd-labeled edges are identified, or even-labeled edges are identified.} \label{momselfadj}
\end{center}
\end{figure}
The diagrammatic method is easily generalized to higher moments, and to other random matrix models where Gaussian matrices are building blocks. 

The simple ingredient behind the diagrammatic method is the fact that one only needs consider conjugate pairings of complex Gaussian
elements~\cite{paper:haagerupthorbjornsen1}, which simplifies the computation of moments to simple identification of edges in graphs in all possible ways, as illustrated. 
This simple fact will be formalized in the following combinatorial definitions, which will be needed for the main results. 
The stated formulas are not new, since it has been known for quite some time that the diagrammatic method
can be used to obtain them. The value in this paper therefore does not lie in these formulas, 
but rather in making the general results possible to write down within a
framework, and available for computation in terms of an accompanying software implementation. 

Without going in all the details, there are similarities with the sketched diagrammatic approach, and other approaches based on diagrammatics. 
In particular in physics, and especially the field of statistical mechanics (see e.g.~\cite{paper:mezard,paper:nishimori}). 
It has been used recently in the field of wireless communications, related to the analysis of the mean and the variance 
of the Signal to Noise Ratio at the output of the MMSE receiver in MIMO and OFDM-CDMA systems~\cite{paper:moustakasdebbah}. 
Instead of calculating all the moments individually, one can represent these operations diagrammatically by solid lines and dashed lines. 
The idea is to draw them using {\em Feynman rules} derived from a generating function, and perform a resummation of all relevant graphs 
where averaging over matrices corresponds to connecting in all possible ways the different lines seperately. 
In many cases, in the large $N$-limit, only terms with non-crossing lines survive, A general description is proposed in~\cite{paper:brezin,paper:argaman,paper:brouwer}. 
The nomenclature we use for stating our results deviate some from that found in the literature.

To explain better how the diagrammatic method is connected to random matrices, 
write the trace of a product of matrices as
\begin{eqnarray}
\lefteqn{\E\left[ \mathrm{tr}({\bf A}_1{\bf A}_2\cdots {\bf A}_p) \right]} \nonumber \\
&=& \frac{1}{n}\sum_{i_1,i_2,...,} a^{(1)}(i_1,i_2)a^{(2)}(i_2,i_3)\cdots a^{(p)}(i_p,i_1), \label{tracewrittenout}
\end{eqnarray}
where the entries of ${\bf A}_k$ are $a^{(k)}(i,j)$.
We will visualize the matrix indices $i_1,...i_p$ as points (in the following also called vertices)
$\bar{1},...,\bar{p}$ on a circle, and the matrix entries
$a^{(1)}(i_1,i_2),...,a^{(p)}(i_p,i_1)$ as edges labeled $1,...,p$, with the
points $\bar{k},\overline{k+1}$ being the end points of the edge labeled $k$. 
We will call this the circular representation of (\ref{tracewrittenout}).
If ${\bf X}$ is $n\times N$ standard, complex, Gaussian, 
the circular representation of $\E\left(\text{tr}(({\bf X}{\bf X}^{H})^4)\right)$, 
before any Gaussian pairings have taken place, is thus shown in
Figure~\ref{fig:geo_interp1}. 
\begin{figure}[!h]
      \centering\epsfig{figure=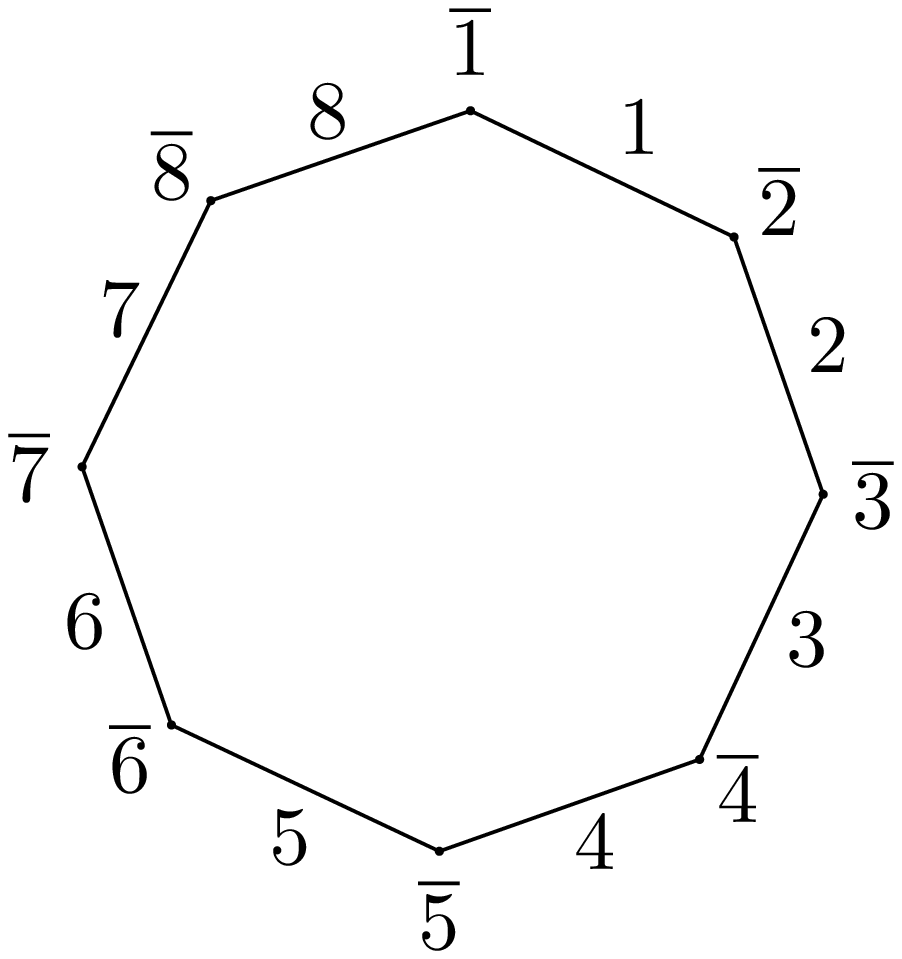,width=0.4\columnwidth}
        \caption{Ordering of points and edges on a circle for the trace $\E\left(\text{tr}(({\bf X}{\bf X}^{H})^4)\right)$ 
        for ${\bf X}$ complex, standard, Gaussian, when it is written out as in
        (\ref{tracewrittenout}). The odd edges $1,3,5,7$ correspond to terms of the form ${\bf X}$; the even edges
        $2,4,6,8$ correspond to terms of the form ${\bf X}^H$; the odd vertices
        $\overline{1},\overline{3},\overline{5},\overline{7}$ correspond
        to a choice among $1,\ldots,n$ (i.e. among row indices in ${\bf X}$) ;
        the even vertices $\overline{2},\overline{4},\overline{6},\overline{8}$ correspond
        to a choice among $1,\ldots,N$ (i.e. among column indices in
        ${\bf X}$). The bars, used to differ between edges and vertices, are
        only used in the figures.}
 \label{fig:geo_interp1}
\end{figure}

More general than (\ref{tracewrittenout}), we can have a product of $k$ traces,
\begin{equation} \label{moregeneralmoment}
\E\left[ 
\mathrm{tr}\left({\bf A}_1                     \cdots {\bf A}_{p_1}\right) \cdots 
\mathrm{tr}\left({\bf A}_{p_1+\cdots+p_{k-1}+1}\cdots {\bf A}_{p_1+\cdots+p_k}\right)\right].
\end{equation}
This will also be given an interpretation in terms of $k$ circles, 
with $p_1,...,p_k$ points/edges on each, respectively.
Conjugate pairings of complex Gaussian elements in (\ref{moregeneralmoment}) in all possible ways are performed as in the case of one circle, 
and is illustrated in Figure~\ref{fig:geo_interp4} for $k=2$ and $p_1=p_2=3$, 
with the first edge on the first cirle paired with the last edge on the second circle. 
In (\ref{moregeneralmoment}), this corresponds to $a^{(1)}(i_1,i_2)$ and
$a^{(12)}(i_{12},i_7)$ being conjugate of each other 
(there are twelve matrices present here,since ${\bf A}_i={\bf X}_i{\bf X}_i^H$).
This can only be the case if $i_1=i_7$ and $i_2=i_{12}$.
\begin{figure}[!h] \centering      \epsfig{figure=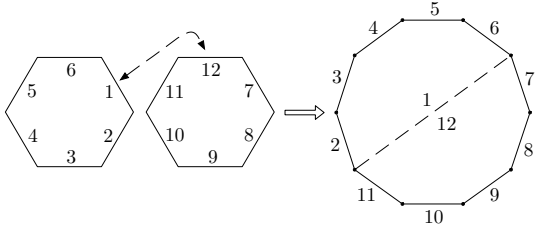,width=0.8\columnwidth}
  \caption{Identification of edges across two circles. Such identifications arise in the computations of (\ref{moregeneralmoment}). }
 \label{fig:geo_interp4}
\end{figure}

\subsection{Formalizing the diagrammatic method}
The following definition, which is a generalization from~\cite{paper:haagerupthorbjornsen1}, 
formalizes identifications of edges, as we have illustrated:
\begin{definition} 
  Let $p$ be a positive integer. By a partial permutation we mean a one-to-one mapping $\pi$
  between two subsets $\rho_1,\rho_2$ of $\{ 1,\ldots,p\}$.
  We denote by $\text{SP}_p$ the set of partial permutations of $p$ elements.
  When $\pi\in\text{SP}_p$, we define $\hat{\pi}\in \text{SP}_{2p}$ by
\begin{align*}
  \hat{\pi}(2j-1)&=2{\pi}^{-1}(j),\quad j\in\rho_2\\
  \hat{\pi}(2j)&=2\pi(j)-1,\quad j\in\rho_1.
\end{align*}
\end{definition}

Note that in this definition, subtraction is performed in such a way that the result stays within the same circle. 
In terms of Figure~\ref{fig:geo_interp4}, this means that $1-1=6,6-1=5$ and $7-1=12,12-1=11$. 
Addition is assumed to be performed in the same way, so that $1+1=2,6+1=1$, and $7+1=8,12+1=7$. 
In the following, this convention for addition and subtraction will be used, and the number of edges on the circles will be implicitly assumed,
and only mentioned when strictly needed.

When we compute $\mathrm{tr}((({\bf D}+{\bf X})({\bf E}+{\bf X})^H)^p)$, we multiply out to obtain a sum of 
terms of length $2p$ on the form (\ref{tracewrittenout}), where the terms are one of ${\bf X},{\bf X}^H,{\bf D}$, or ${\bf E}^H$, 
with $\cdot$- and $\cdot^H$-terms appearing in alternating order.
$\rho_1$ corresponds to the indices of ${\bf X}^H$ in such a term (after their order of appearance), 
$\rho_2$ to the indices of ${\bf X}$, and $\pi$ to the Gaussian conjugate pairings.
Computing $\mathrm{tr}((({\bf D}+{\bf X})({\bf E}+{\bf X})^H)^p)$ thus boils down to iterating through $\text{SP}_p$. 
In Figure~\ref{fig:momsamplcov}, this was examplified with $p=2$, with the sizes of the subsets equal to $1$. 
$\rho_1$ was indicated by the starting edges of the arrows, $\rho_2$ by the ending edges. 
(a) to (d) represents the only possible pairings in the terms 
${\bf X}{\bf E}^H{\bf D}{\bf X}^H$, ${\bf D}{\bf E}^H{\bf X}{\bf X}^H$, ${\bf X}{\bf X}^H{\bf D}{\bf E}^H$, and ${\bf D}{\bf X}^H{\bf X}{\bf E}^H$, respectively.
The case for a Wishart matrix is simpler, since there is no need to multiply out terms, and we need only consider $\pi\in\text{SP}_p$ where $|\rho_1|=|\rho_2|=p$, 
as shown in Figure~\ref{fig:momwishart}. Such $\pi$ are in one-to-one correspondence with ${\cal S}_p$, the set of permutations of $p$ elements.

It is clear that $\hat\pi$ maps $(2\rho_1)$ onto $\cup(2\rho_2-1)$, and has period two (i.e. $\hat{\pi}(\hat{\pi}(\rho))=\rho$ for all $\rho$),
where $2\rho_2-1=\{ 2k-1|k\in\rho_2\}$. 
In particular, $\hat\pi$ maps even numbers to odd numbers, and vice versa.
When edges are identified as dictated by a partial permutation, 
vertices are identified as dictated by the partition $\rho(\pi)$ defined as follows:
\begin{definition}\label{thisdef}
Let $\pi$ be a partial permutation,
and let $\hat\pi$ be determined by $\rho_1,\rho_2$ and a pairing between them.
We associate to $\pi$ an equivalence relation $\rho=\rho(\pi)$ on $\{ 1,...,2p\}$ generated by
\begin{equation}\label{equivalence2}
  j\sim_{\rho} \hat{\pi}(j) +1 \mbox{, } j+1\sim_{\rho} \hat{\pi}(j) \mbox{, for } j\in\rho_1.
\end{equation}
We let $k(\rho)$ and $l(\rho)$ denote the number of blocks of $\rho$ consisting of only even or odd numbers, respectively.
\end{definition}

Any block in $\rho$ consists either of even numbers, or odd numbers, since $\hat\pi$ maps between even and odd numbers, 
so that the definitions of $k(\rho)$ and $l(\rho)$ above make sense.
In the following, we will let $k_1,\ldots,k_{k(\rho)}$ be the cardinalities of the blocks consisting of even numbers only,
and $l_1,\ldots,l_{l(\rho)}$ the cardinalities of the blocks consisting of odd numbers only. 
The restriction of $\rho$ to the odd numbers thus defines another partition, which we will denote $\rho|\text{odd}$.
Similarly, the restriction of $\rho$ to the even numbers yields another partition, which we will denote $\rho|\text{even}$.
$\rho|\text{odd}$ and $\rho|\text{even}$ will appear in the main results later on.

$\rho$ should be interpreted as an equivalence relation on matrix indices occuring in ${\bf X},{\bf X}^H$. 
The following definition similarly keeps track of how conjugate pairings group matrix indices occuring in ${\bf D},{\bf E}^H$ into traces:
\begin{definition} \label{sigmadef}
Let ${\cal D}\subset\{ 1,..,2p\}$ be the set of deterministic edges (i.e. edges corresponding to ocurrences of ${\bf D},{\bf E}^H$), 
and let $\pi\in\text{SP}_p$ be determined by $\rho_1,\rho_2$.
$\sigma=\sigma(\pi)$ is defined as the equivalence relation on ${\cal D}$ generated by the relations
\begin{eqnarray}
  k \sim_{\sigma} k+1 &\text{if}& k,k+1\in{\cal D} \label{eq1} \\
  k \sim_{\sigma} l   &\text{if}& k,l\in{\cal D},k+1\sim_{\rho}l. \label{eq2}
\end{eqnarray}
Let also $kd(\rho)$ be the number of blocks of $\rho$
contained within the even numbers which intersect ${\cal D}\cup({\cal D}+1)$,
and let $ld(\rho)$ be the number of blocks of $\rho$
contained within the odd numbers which intersect ${\cal D}\cup({\cal D}+1)$.
\end{definition}

Two edges from ${\cal D}$ belong to the same block of $\sigma$ if, after identifying edges, they are connected with a path of edges from ${\cal D}$. 
A block of $\rho$ which contains a vertex from ${\cal D}\cup({\cal D}+1)$ corresponds to a matrix index which occurs in a deterministic element. 
As an example, in Figure~\ref{fig:momsamplcov} all four partial permutations are seen to give rise to a $\sigma$ with one block only. 
They are seen to be $\sigma=\{2,3\}$ for (a), $\sigma=\{1,2\}$ for (b), $\sigma=\{3,4\}$ for (c), and $\sigma=\{1,4\}$ for (d).

\subsection{Formalizing the diagrammatic method for selfadjoint matrices}
A standard, selfadjoint, Gaussian $n\times n$ random matrix ${\bf X}$ can be written on the form ${\bf X}=\frac{1}{\sqrt{2}}({\bf Y}+{\bf Y}^H)$,
where ${\bf Y}$ is an $n\times n$ standard complex Gaussian matrix.
We can thus compute the moments of ${\bf R}{\bf X}$ and ${\bf R}+{\bf X}$ (with ${\bf X}$ selfadjoint Gaussian, and ${\bf R}$ selfadjoint and independent from ${\bf X}$) by
substituting this, and summing over all possible combinations of ${\bf Y}$ and ${\bf Y}^H$. This rewriting in terms of complex Gaussian matrices means that we need to
slightly change the definitions of the partitions $\rho$ and $\sigma$ to the following:
\begin{definition} \label{olddef}
Let $\pi\in \text{SP}_p$ be determined by
disjoint subsets $\rho_1,\rho_2$ of $\{1,\ldots,p\}$ with $|\rho_1|=|\rho_2|$ (in particular, $2|\rho_1|\leq p$).
We associate to $\pi$ an equivalence relation $\rho_{sa}=\rho_{sa}(\pi)$ on $\{ 1,...,2p\}$ generated by
\begin{eqnarray*}
  i \sim_{\rho_{sa}} \pi(i)+1      &\text{for}& i\in\rho_1, \\
  \pi^{-1}(i)+1 \sim_{\rho_{sa}} i &\text{for}& i\in\rho_2.
\end{eqnarray*}
\end{definition}

As before, $\rho_1$ corresponds to choices from ${\bf X}^H$, $\rho_2$ to choices from ${\bf X}$, 
when the selfadjoint Gaussian matrix is expressed as a sum of complex Gaussian matrices. 
Definition~\ref{sigmadef} is modified as follows:
\begin{definition} \label{newsigmadef}
With $\pi,\rho_1,\rho_2$ as in Definition~\ref{olddef},
$\sigma_{sa}=\sigma_{sa}(\pi)$ is defined as the equivalence relation on ${\cal D}=\left( \rho_1\cup\rho_2 \right)^c$ generated by the relations
\begin{eqnarray}
  k\sim_{\sigma_{sa}} k+1 &\text{if}&  k,k+1\in{\cal D} \label{eq1new} \\
  k\sim_{\sigma_{sa}} l   &\text{if}&  k,l\in{\cal D},k+1\sim_{\rho_{sa}}l \nonumber \\
                          &         &  \text{or }  k\sim_{\rho_{sa}}l+1. \label{eq2new}
\end{eqnarray}
Define also $d(\rho_{sa})$ as the number of blocks of $\sigma_{\rho_{sa}}$ which intersect ${\cal D}\cup({\cal D}+1)$.
\end{definition}

As an example, In Figure~\ref{momselfadj}(c) we have that $\rho_{sa}=\{\{1,2\},\{3,4\}\}$, $\sigma_{sa}=\{\{1\},\{3\}\}$, 
in Figure~\ref{momselfadj}(d) we have that $\rho_{sa}=\{\{1,2,3,4\}\}$, $\sigma_{sa}$ is the empty partition.

In the following, we will state our results in terms of normalized traces.
We remark that some of these have been stated previously in terms of non-normalized traces~\cite{paper:haagerupthorbjornsen1}.
In some results, we have substituted $c=\frac{n}{N}$, which makes the results compatible with the asymptotic case often used in the literature, 
where $n$ and $N$ grow to infinity at the same rate, the rate being $c=\lim_{n\to\infty}\frac{n}{N}$.
In the following, equivalence relations will interchangeably also be refered to as partitions.

\section{Statement of main results} \label{section:theorems}
The main results of the paper are split into three sections. 
In the first, basic sums and products are considered,
basic meaning that there is only one random matrix involved. 
In the second section we expand to the case where independent random matrices are involved, 
in which case expectations of products of traces are brought into the picture.
In these two sections, all Gaussian matrices are assumed complex and
rectangular, for which the results relate to the moments of the singular law of the matrices. 
In the third section we state similar results for the case where the Gaussian matrices instead are assumed square and selfadjoint.

\subsection{Basic sums and products}
Our first and simplest result concerns the moments of a doubly correlated Wishart matrix. 
These matrices are the most general known form we have found which have been considered in the literature~\cite{paper:bjw}, 
which can be addressed by our results: 
\begin{theorem}\label{teo1}
Let $n,N$ be positive integers, ${\bf X}$ be $n\times N$ standard, complex, Gaussian,
and ${\bf D}$ a (deterministic) $n\times n$ matrix,
${\bf E}$ a (deterministic) $N\times N$ matrix.
For any positive integer $p$,
\begin{eqnarray*}
\lefteqn{\E\left[\mathrm{tr}\left(\left(\frac{1}{N}{\bf D}{\bf X}{\bf E}{\bf X}^H\right)^p\right)\right]} \nonumber \\
&=& \sum_{\pi\in S_p} N^{k(\rho)-p} n^{l(\rho)-1} D_{\rho|\text{odd}} E_{\rho|\text{even}}. \label{momentsDXX^H}
\end{eqnarray*}
\end{theorem}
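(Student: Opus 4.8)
The plan is to expand the trace into an explicit index sum, apply the conjugate Gaussian pairing rule (Wick's theorem) that underlies the diagrammatic method, and then perform the index bookkeeping that converts the surviving terms into the claimed powers of $N$ and $n$ together with the deterministic mixed moments $D_{\rho|\text{odd}}$ and $E_{\rho|\text{even}}$. First I would factor out the scalar as $\left(\frac{1}{N}{\bf D}{\bf X}{\bf E}{\bf X}^H\right)^p = N^{-p}({\bf D}{\bf X}{\bf E}{\bf X}^H)^p$ and write out the remaining trace as in (\ref{tracewrittenout}),
\[
\E\left[\mathrm{tr}\left(({\bf D}{\bf X}{\bf E}{\bf X}^H)^p\right)\right] = \frac{1}{n}\sum \E\left[\prod_{m=1}^{p} D(a_m,b_m)\,X(b_m,c_m)\,E(c_m,d_m)\,\overline{X(a_{m+1},d_m)}\right],
\]
the sum running over all index values and indices being cyclic ($a_{p+1}=a_1$). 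Here $a_m,b_m\in\{1,\ldots,n\}$ are the $n$-indices shared with ${\bf D}$ and with the rows of ${\bf X}$ (the odd vertices), while $c_m,d_m\in\{1,\ldots,N\}$ are the $N$-indices shared with ${\bf E}$ and with the columns of ${\bf X}$ (the even vertices); thus ${\bf D}$ sits at the odd vertices and ${\bf E}$ at the even vertices.

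Next I would take the expectation over the entries of ${\bf X}$. Since these are i.i.d. standard complex Gaussian, only conjugate pairings of ${\bf X}$ with ${\bf X}^H$ survive, so Wick's theorem reduces the expectation to a sum over bijections $\pi\in S_p$ matching the $p$ copies of ${\bf X}$ with the $p$ copies of ${\bf X}^H$,
\[
\E\left[\prod_{m=1}^{p} X(b_m,c_m)\prod_{m=1}^{p}\overline{X(a_{m+1},d_m)}\right] = \sum_{\pi\in S_p}\prod_{m=1}^{p}\delta_{b_m,\,a_{\pi(m)+1}}\,\delta_{c_m,\,d_{\pi(m)}}.
\]
The Kronecker deltas force $b_m=a_{\pi(m)+1}$ among the $n$-indices and $c_m=d_{\pi(m)}$ among the $N$-indices; these are exactly the vertex identifications generating the partition $\rho=\rho(\pi)$ of Definition~\ref{thisdef}. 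Because $\hat\pi$ interchanges parities, odd vertices are only ever identified with odd vertices and even with even, so $\rho$ splits into the odd blocks counted by $l(\rho)$ and the even blocks counted by $k(\rho)$.

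For a fixed $\pi$ I would then substitute the deltas and factor the remaining sum over the blocks of $\rho$. The deterministic factors become $\prod_{m} D(a_m,a_{\pi(m)+1})$ and $\prod_{m} E(d_{\pi(m)},d_m)$, which chain along the cycles of $m\mapsto\pi(m)+1$ and of $\pi$ respectively; since every copy is the \emph{same} matrix, the ordering within a block is irrelevant and each odd block $W$ contributes $\mathrm{Tr}({\bf D}^{|W|})$ while each even block $V$ contributes $\mathrm{Tr}({\bf E}^{|V|})$. Writing $\mathrm{Tr}=n\,\mathrm{tr}$ on each of the $l(\rho)$ odd blocks and $\mathrm{Tr}=N\,\mathrm{tr}$ on each of the $k(\rho)$ even blocks, and recalling the mixed-moment notation of Definition~\ref{ddef2}, this becomes
\[
\sum_{a_1,\ldots,a_p}\sum_{d_1,\ldots,d_p}\prod_{m=1}^p D(a_m,a_{\pi(m)+1})\,E(d_{\pi(m)},d_m) = n^{l(\rho)}\,N^{k(\rho)}\,D_{\rho|\text{odd}}\,E_{\rho|\text{even}}.
\]

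Finally I would reinstate the normalization: the prefactor $\frac{1}{n}$ from the normalized trace and the $N^{-p}$ pulled out at the start multiply each term, giving $N^{k(\rho)-p}\,n^{l(\rho)-1}\,D_{\rho|\text{odd}}\,E_{\rho|\text{even}}$, and summing over $\pi\in S_p$ yields the claim. The expansion and the Wick step are routine; the main difficulty is the purely combinatorial verification that the Gaussian-induced identifications coincide exactly with the relation $\rho(\pi)$ of Definition~\ref{thisdef} — so that the free summation indices are genuinely in bijection with the blocks of $\rho$, producing precisely $l(\rho)$ factors of $n$ and $k(\rho)$ factors of $N$ — and that the deterministic entries assemble into $D_{\rho|\text{odd}}$ and $E_{\rho|\text{even}}$ rather than some other contraction. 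This is the heart of the diagrammatic method, and I expect it is most safely checked by tracking a single block of odd (resp. even) vertices through the identifications.
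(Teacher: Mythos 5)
Your proposal is correct and takes essentially the same route as the paper: the paper's Appendix~\ref{proofteo1} also reduces the moment to a sum over $S_p$ of conjugate pairings and then performs exactly your cycle/block bookkeeping (its Proposition~\ref{prop2}), the only difference being that the paper obtains the pairing sum via the Gaussian replication trick ${\bf X}=s^{-1/2}({\bf X}_1+\cdots+{\bf X}_s)$, $s\to\infty$ (its Proposition~\ref{prop1}) instead of citing Wick's theorem directly. One small point on the combinatorial check you flag at the end: with your pairing convention the odd-vertex identifications are the orbits of $m\mapsto\pi(m)+1$, which is $\rho(\pi^{-1})|\text{odd}$ in the notation of Definition~\ref{thisdef} rather than $\rho(\pi)|\text{odd}$, but since the sum ranges over all of $S_p$ this inversion is a harmless relabeling and the final formula is unaffected.
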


Theorem~\ref{teo1} is proved in Appendix~\ref{proofteo1}.
The next results will be proved using the same techniques, and will therefore be given shorter proofs.
The special case of a product of a Wishart matrix and a deterministic matrix, and a Wishart matrix itself, can be considered as a special case. 
It is seen that, for the latter, the contribution for the identification of edges refered to in Figure~\ref{fig:momwishart} is equal to $N^{k(\rho)-p} n^{l(\rho)-1}$, 
which indeed depends only on $n,N$, and the number of even-labeled and odd-labeled vertices in the resulting graphs. 
As an example, the contributions from the two possible identifications of edges giving the second moment of a Wishart matrix is 
$N^{1-2}n^{2-1}=\frac{n}{N}=c$ (Figure~\ref{fig:momwishart}(a)) and
$N^{2-2}n^{1-1}=1$ (Figure~\ref{fig:momwishart}(b)). The second moment is thus $1+c$, 
which also can be infered from the more general formuals in Section~\ref{software}.

We remark also that other closed forms of (\ref{momentsDXX^H}) can be found in the literature.
When the Wishart matrices are one-sided correlated (i.e. ${\bf E}=I$),~\cite{paper:tucci1} gives us the means to find the first order moments 
(i.e. one circle only is involved) in certain cases, also if the $p$'th moment is replaced with more general functionals of ${\bf X}$. It seems, however, 
that this result and the techniques used to prove it are hard to generalize.

We now turn to the moments of $({\bf D}+{\bf X})({\bf E}+{\bf X})^H$.
In the large $n,N$-limit, the case where ${\bf D}={\bf E}$ is related to the concept of rectangular free convolution~\cite{benaychgeorges1},
which admits a nice implementation in terms of moments~\cite{eurecom:multfreeconv}.
When $n$ and $N$ are finite, the following will be proved in Appendix~\ref{proofteo2}.

\begin{theorem} \label{teo2}
Let ${\bf X}$ be an $n\times N$ standard, complex, Gaussian matrix,
${\bf D},{\bf E}$ deterministic $n\times N$ matrices,
and set $D_p=\mathrm{tr}\left(\left(\frac{1}{N}{\bf D}{\bf E}^H\right)^p\right)$.
We have that
\begin{eqnarray}\label{genformula}
  \lefteqn{\E\left[ \mathrm{tr}\left(\left( \frac{1}{N} ({\bf D}+{\bf X})({\bf E}+{\bf X})^H \right)^p\right) \right]}  \nonumber \\
  &=& \sum_{{\pi\in \text{SP}_p}\atop{\pi=\pi(\rho_1,\rho_2,q)}} \frac{1}{nN^{|\rho_1|}} N^{k(\rho(\pi))-kd(\rho(\pi))} \nonumber \\
  & & \hspace{1cm} \times n^{l(\rho(\pi))-ld(\rho(\pi))} \nonumber \\
  & & \hspace{1cm} \times n^{|\sigma(\pi)|} \prod_i D_{|\sigma(\pi)_i|/2}.
\end{eqnarray}
\end{theorem}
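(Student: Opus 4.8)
The plan is to follow the combinatorial strategy used for Theorem~\ref{teo1} (carried out in Appendix~\ref{proofteo1}), the essential new ingredient being that partial rather than full permutations now appear, because each of the $2p$ factors may be chosen deterministic instead of Gaussian. First I would expand $\bigl(({\bf D}+{\bf X})({\bf E}+{\bf X})^H\bigr)^p$ by multiplicativity into a sum of $2^{2p}$ products of length $2p$, each an alternating product whose odd-position factors are chosen from $\{{\bf X},{\bf D}\}$ and whose even-position factors from $\{{\bf X}^H,{\bf E}^H\}$, and write every resulting trace in the circular form (\ref{tracewrittenout}) as a sum over $2p$ matrix indices around the circle. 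Taking the expectation, the deterministic factors pass through unchanged, while the expectation over the Gaussian factors is evaluated by Wick's theorem. Since the entries of ${\bf X}$ are complex Gaussian with independent real and imaginary parts and zero mean, only conjugate pairings survive, i.e.\ each chosen ${\bf X}$ must be paired with a chosen ${\bf X}^H$ with $\E[a(i,j)\overline{a(k,l)}]=\delta_{ik}\delta_{jl}$; this is the fact recorded in~\cite{paper:haagerupthorbjornsen1}. Each surviving term is therefore indexed by the set $\rho_2$ of ${\bf X}$-positions, the set $\rho_1$ of ${\bf X}^H$-positions (necessarily $|\rho_1|=|\rho_2|$), and a bijection between them, which is exactly a partial permutation $\pi\in\text{SP}_p$.

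Next I would read off the index identifications forced by each pairing. The Kronecker deltas identify the matrix indices sitting at the endpoints of paired edges, and these identifications are precisely those encoded by $\hat\pi$ and the induced vertex partition $\rho=\rho(\pi)$ of Definition~\ref{thisdef}, while the deterministic edges group the remaining indices into the partition $\sigma=\sigma(\pi)$ of Definition~\ref{sigmadef}. After the identifications, each remaining free index is summed independently: an even (column) index ranges over $\{1,\dots,N\}$ and an odd (row) index over $\{1,\dots,n\}$. The blocks of $\rho$ meeting no deterministic edge are summed freely and contribute $N^{k(\rho)-kd(\rho)}$ from the even blocks and $n^{l(\rho)-ld(\rho)}$ from the odd blocks, the blocks touching deterministic edges ($kd(\rho)$ even ones, $ld(\rho)$ odd ones) having their summations absorbed instead into the deterministic factors. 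Tracing through $\sigma$, the deterministic edges connected through the pairing assemble into closed cyclic products, so the contribution of a block $\sigma_i$ is the trace $\mathrm{Tr}\bigl(({\bf D}{\bf E}^H)^{|\sigma_i|/2}\bigr)$.

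Finally I would collect the powers of $n$ and $N$. The overall prefactor is $\frac{1}{nN^p}$, and since $\sum_i|\sigma_i|/2=p-|\rho_1|$ (the number of deterministic $\cdot$-factors), rewriting each trace via $\mathrm{Tr}\bigl(({\bf D}{\bf E}^H)^m\bigr)=nN^m D_m$ converts $\frac{1}{nN^p}\prod_i\mathrm{Tr}(\cdots)$ into $\frac{1}{nN^{|\rho_1|}}\,n^{|\sigma(\pi)|}\prod_i D_{|\sigma(\pi)_i|/2}$, which combined with the free-index factors $N^{k(\rho)-kd(\rho)}n^{l(\rho)-ld(\rho)}$ reproduces (\ref{genformula}). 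The main obstacle is exactly this exponent bookkeeping: one must verify that every index summation is counted once, neither double-counted nor omitted, so that the blocks consumed by deterministic edges are precisely the $kd(\rho)$ and $ld(\rho)$ subtracted from $k(\rho)$ and $l(\rho)$, and that the deterministic edges genuinely close up into the powers $({\bf D}{\bf E}^H)^{|\sigma_i|/2}$ dictated by $\sigma$ rather than into some other product; once this accounting is pinned down, the rest is routine.
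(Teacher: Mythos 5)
Your proposal is correct and follows essentially the same route as the paper's Appendix~\ref{proofteo2}: expand into alternating products indexed by partial permutations, invoke the fact that only conjugate Gaussian pairings survive, read off the vertex identifications via $\rho(\pi)$ and the grouping of deterministic edges via $\sigma(\pi)$, and then do the exponent bookkeeping, which your count $\sum_i|\sigma_i|/2=p-|\rho_1|$ carries out correctly. The one step you flag but defer --- that the deterministic edges close up into alternating cyclic products $\mathrm{Tr}\bigl(({\bf D}{\bf E}^H)^{|\sigma_i|/2}\bigr)$ --- is exactly the case analysis of the relations (\ref{eq1}) and (\ref{eq2}) that occupies most of the paper's proof, so it is correctly identified as the crux rather than a routine afterthought.
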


Note that in Theorem~\ref{teo2}, $n$- and $N$-terms have not been grouped together.
This has been done to make clear in the proof the origin of the different terms.

\subsection{Expectations of products of traces}
Theorems~\ref{teo1} and~\ref{teo2} can be recursively applied, once one replaces ${\bf D}$ and ${\bf E}$ with random matrices.
In this process, we will see that expectations of products of traces are also needed, not only the first order moments as in theorems~\ref{teo1} and~\ref{teo2}. 
The recursive version of Theorem~\ref{teo1} looks as follows.

\begin{theorem} \label{recursive}
Assume that the $n\times n$ random matrix ${\bf R}$ and the $N\times N$ random matrix ${\bf S}$ are both independent from
the $n\times N$ standard, complex, Gaussian matrix ${\bf X}$, and define
\begin{align*}
  R_{l_1,...,l_r,m_1,...,m_s} &= \E\left[   \mathrm{tr}\left( {\bf R}^{l_1} \right) \mathrm{tr}\left( {\bf R}^{l_2} \right) \cdots \mathrm{tr}\left( {\bf R}^{l_r} \right)\right. \\
                              &      \left. \times\mathrm{tr}\left( {\bf S}^{m_1} \right) \mathrm{tr}\left( {\bf S}^{m_2} \right) \cdots \mathrm{tr}\left( {\bf S}^{m_s} \right)\right] \\
  M_{p_1,\ldots,p_k} &= \E\left[            \mathrm{tr}\left(\left( \frac{1}{N}{\bf R}{\bf X}{\bf S}{\bf X}^H \right)^{p_1}\right) \right.\\
                     &\qquad  \times        \mathrm{tr}\left(\left( \frac{1}{N}{\bf R}{\bf X}{\bf S}{\bf X}^H \right)^{p_2}\right) \cdots \\
                     &\qquad  \left. \times \mathrm{tr}\left(\left( \frac{1}{N}{\bf R}{\bf X}{\bf S}{\bf X}^H \right)^{p_k}\right) \right].
\end{align*}
Set $p=p_1+\cdots+p_k$, and let as before 
$l_1,\ldots,l_r$ be the cardinalities of the blocks of odd numbers only of $\rho$,
$m_1,\ldots,m_s$ be the cardinalities of the blocks of even numbers only of $\rho$,
with $k(\rho)$, $l(\rho)$ the number of blocks consisting of even and odd numbers only, respectively.
We have that
\begin{eqnarray}
  \lefteqn{M_{p_1,\ldots,p_k}} \nonumber \\
  &=& \sum_{\pi\in S_p} N^{k(\rho(\pi))-p}n^{l(\rho(\pi))-k} R_{l_1,...,l_r,m_1,...,m_s}. \label{recursiveformula}
\end{eqnarray}
\end{theorem}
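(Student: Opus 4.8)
The plan is to reduce the recursive statement to a conditional application of Theorem~\ref{teo1}, exploiting the independence of $\mathbf{X}$ from $\mathbf{R}$ and $\mathbf{S}$. First I would condition on $\mathbf{R}$ and $\mathbf{S}$, treating them momentarily as if they were the deterministic matrices $\mathbf{D},\mathbf{E}$ of Theorem~\ref{teo1}. The single-trace case $k=1$ is then almost immediate: by the tower property, $\E[\mathrm{tr}((\tfrac{1}{N}\mathbf{R}\mathbf{X}\mathbf{S}\mathbf{X}^H)^{p})] = \E_{\mathbf{R},\mathbf{S}}\big[\E_{\mathbf{X}}[\,\cdot\,\mid \mathbf{R},\mathbf{S}]\big]$, and the inner expectation is exactly the right-hand side of Theorem~\ref{teo1} with $\mathbf{D}=\mathbf{R}$, $\mathbf{E}=\mathbf{S}$. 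The key observation is that the sum $\sum_{\pi\in S_p} N^{k(\rho)-p} n^{l(\rho)-1} D_{\rho|\text{odd}} E_{\rho|\text{even}}$ is \emph{linear} in the mixed-moment quantities $D_{\rho|\text{odd}}$ and $E_{\rho|\text{even}}$, so taking the outer expectation simply replaces each product $D_{\rho|\text{odd}}E_{\rho|\text{even}}$ by $\E[\,R_{\rho|\text{odd}}\,S_{\rho|\text{even}}\,]$, which by the definition of the block cardinalities $l_1,\dots,l_r$ and $m_1,\dots,m_s$ is exactly $R_{l_1,\dots,l_r,m_1,\dots,m_s}$.

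For the general $k$-trace case I would return to the circular representation: the product of $k$ traces corresponds to $k$ circles carrying $p_1,\dots,p_k$ edges, with $p=p_1+\cdots+p_k$ edges in total. The essential point is that the Gaussian pairing machinery underlying Theorem~\ref{teo1} treats all $p$ occurrences of $\mathbf{X}$ (and $\mathbf{X}^H$) on an equal footing regardless of which circle they sit on; a conjugate pairing $\pi\in S_p$ may link edges on the same circle or across circles, exactly as illustrated in Figure~\ref{fig:geo_interp4}. Hence conditioning on $\mathbf{R},\mathbf{S}$ and performing the Gaussian integral over $\mathbf{X}$ produces the same sum over $\pi\in S_p$, now with the combinatorial exponent $n^{l(\rho(\pi))-k}$ instead of $n^{l(\rho(\pi))-1}$, the single $-1$ being replaced by $-k$ because each of the $k$ circles contributes one factor of $\tfrac1n$ from its normalized trace. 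The resulting deterministic-side factor is again a product over the odd-blocks and even-blocks of $\rho(\pi)$, i.e.\ $D_{\rho|\text{odd}}E_{\rho|\text{even}}$, whose outer expectation is $R_{l_1,\dots,l_r,m_1,\dots,m_s}$, yielding (\ref{recursiveformula}).

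The main obstacle, and the step I would treat most carefully, is verifying that the exponent of $n$ is precisely $l(\rho(\pi))-k$ and that the deterministic factors group correctly into $R_{l_1,\dots,l_r,m_1,\dots,m_s}$ in the multi-circle setting. Specifically, I must check that the bookkeeping of vertex-index contractions across several circles still yields one free factor of $N$ per even-block and one free factor of $n$ per odd-block (giving $N^{k(\rho)}n^{l(\rho)}$ before normalization), and that the $k$ normalizing traces contribute exactly $n^{-k}$ while the $p$ factors of $\tfrac1N$ from the $\tfrac1N$-scalings contribute $N^{-p}$. Since Theorem~\ref{teo1} already establishes this accounting for $k=1$, and the Gaussian-pairing argument is insensitive to the partition of edges into circles, the extension amounts to confirming that the only change is the replacement $n^{-1}\mapsto n^{-k}$ in the normalization; this can be done by the same edge-identification analysis used in the proof of Theorem~\ref{teo1} in Appendix~\ref{proofteo1}, which is why the theorem admits a short proof.
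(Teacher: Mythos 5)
Your proposal is correct and follows essentially the same route as the paper: the paper's proof likewise reduces Theorem~\ref{recursive} to Theorem~\ref{teo1} by noting that independence of $\mathbf{X}$ from $\mathbf{R},\mathbf{S}$ lets one replace the deterministic mixed moments by expectations (your conditioning/tower-property step makes this explicit), and that the only other change is the substitution $n^{-1}\mapsto n^{-k}$ coming from the $k$ normalized traces. Your multi-circle bookkeeping of the exponents $N^{k(\rho)-p}n^{l(\rho)-k}$ is a more detailed rendering of the same accounting the paper leaves implicit.
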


\begin{proof}
There are only two differences from Theorem~\ref{teo1}. 
First, $n^{-1}$ is replaced with $n^{-k}$, since we now are taking $k$ traces instead of $1$ (we modify with additional trace normalization factors). 
Second, we replace the trace of a deterministic matrix with the expectation of a random matrix.
It is clear that the only additional thing needed for the proof to be replicated is that the random matrices ${\bf X}$ and ${\bf R}$ are independent.
\end{proof}

In some cases, for instance when ${\bf S}=I$, Theorem~\ref{recursive} also allows for deconvolution.
By this we mean that we can write down unbiased estimators for $R_{p_1,...,p_k}$, 
(which is the simplified notation for the mixed moments for the case where ${\bf S}=I$)
from an observation ${\bf Y}$ of $\frac{1}{N}{\bf R}{\bf X}{\bf X}^H$. 
This is achieved by stating all possible equations in (\ref{recursiveformula}) (i.e. for all possible $p_1,\ldots,p_k$), 
and noting that these express a linear relationship between all $\{R_{p_1,...,p_k}\}_{p_1,...,p_k}$, and all $\{M_{p_1,\ldots,p_k}\}_{p_1,...,p_k}$, 
where there are as many equations are unknowns. 
The implementation presented in Section~\ref{software} thus performs deconvolution by constructing the matrix corresponding to (\ref{recursiveformula}), 
and applying the inverse of this to the aggregate vector of all mixed moments of the observation.
We remark that the inverse may not exist if $N<n$, as will also be seen from expressions for these matrices in Section~\ref{software}.

It is also clear that the theorem can be recursively applied to compute the moments of any product of independent Wishart matrices
\begin{equation} \label{multprod}
  {\bf D}\frac{1}{N_1}{\bf X}_1{\bf X}_1^H\frac{1}{N_2}{\bf X}_2{\bf X}_2^H\cdots \frac{1}{N_k}{\bf X}_k{\bf X}^H_k,
\end{equation}
where ${\bf D}$ is deterministic and ${\bf X}_i$ is an $n\times N_i$
standard complex Gaussian matrix.
The ${\bf R}$'s during these recursions will simply be
\begin{align*}
  {\bf R}_1 &= {\bf D}\frac{1}{N_1}{\bf X}_1{\bf X}_1^H\frac{1}{N_2}{\bf X}_2{\bf X}_2^H\cdots \frac{1}{N_{k-1}}{\bf X}_{k-1}{\bf X}^H_{k-1} \\
  {\bf R}_2 &= {\bf D}\frac{1}{N_1}{\bf X}_1{\bf X}_1^H\frac{1}{N_2}{\bf X}_2{\bf X}_2^H\cdots \frac{1}{N_{k-2}}{\bf X}_{k-2}{\bf X}^H_{k-2} \\
  \vdots    &  \quad\ \vdots \\
  {\bf R}_k &= {\bf D}.
\end{align*}
Unbiased estimators for the moments of ${\bf D}$ from observations of the form (\ref{multprod}) can also be written down. 
Such deconvolution is a multistage process, where each stage corresponds to multiplication with an inverse matrix, 
as in the case where only one Wishart matrix is involved.

The recursive version of Theorem~\ref{teo2} looks as follows.

\begin{theorem} \label{recursivesum}
Let ${\bf X}$ be an $n\times N$ standard, complex, Gaussian matrix and let ${\bf R},{\bf S}$ be $n\times N$ and independent from ${\bf X}$. Set
\begin{align*}
  R_{p_1,\ldots,p_k} &= \E\left[        \mathrm{tr}\left( \left( \frac{1}{N}{\bf R}{\bf S}^H\right)^{p_1} \right) \right.\\
                  & \qquad \times       \mathrm{tr}\left( \left( \frac{1}{N}{\bf R}{\bf S}^H\right)^{p_2} \right) \cdots \\
                  & \qquad \left.\times \mathrm{tr}\left( \left( \frac{1}{N}{\bf R}{\bf S}^H\right)^{p_k} \right) \right] \\
  M_{p_1,\ldots,p_k} &= \E\left[           \mathrm{tr}\left( \left( \frac{1}{N}({\bf R}+{\bf X})({\bf S}+{\bf X})^H \right)^{p_1}\right)\right.\\
                     & \qquad \times       \mathrm{tr}\left( \left( \frac{1}{N}({\bf R}+{\bf X})({\bf S}+{\bf X})^H \right)^{p_2}\right) \cdots \\
                     & \qquad \left.\times \mathrm{tr}\left( \left( \frac{1}{N}({\bf R}+{\bf X})({\bf S}+{\bf X})^H \right)^{p_k}\right)\right].
\end{align*}
We have that
\begin{align}\label{genformulasumhere}
  M_{p_1,...,p_k} &= \sum_{{\pi\in \text{SP}_p}\atop{\pi=\pi(\rho_1,\rho_2,q)}} \frac{1}{n^kN^{|\rho_1|}} \nonumber \\
                  &\qquad\qquad \times N^{k(\rho(\pi))-kd(\rho(\pi))}  \nonumber \\
                  &\qquad\qquad \times n^{l(\rho(\pi))-ld(\rho(\pi))} n^{|\sigma|} \nonumber \\
                  &\qquad\qquad \times R_{l_1,\ldots,l_r},
\end{align}
where $l_1,\ldots,l_r$ are the cardinalities of the blocks of $\sigma$, divided by $2$.
\end{theorem}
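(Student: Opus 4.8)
The plan is to mirror the reduction already used for Theorem~\ref{recursive}, but starting from Theorem~\ref{teo2} instead of Theorem~\ref{teo1}. The key observation is that $\mathbf{R}$ and $\mathbf{S}$ enter $M_{p_1,\ldots,p_k}$ only through the factors that play the role of the deterministic matrices in Theorem~\ref{teo2}, whereas the Gaussian averaging is performed over $\mathbf{X}$ alone. So I would first condition on $\mathbf{R}$ and $\mathbf{S}$: since these are independent from $\mathbf{X}$, the tower property lets me carry out the inner expectation over $\mathbf{X}$ treating $\mathbf{R}$ and $\mathbf{S}$ as fixed deterministic matrices (i.e. apply Theorem~\ref{teo2} with $\mathbf{D}=\mathbf{R}$, $\mathbf{E}=\mathbf{S}$), and only afterwards take the outer expectation over $(\mathbf{R},\mathbf{S})$.

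Next I would set up the combinatorics exactly as in the proof of Theorem~\ref{teo2}, the single structural change being that a product of $k$ normalized traces is represented by $k$ circles rather than one, as described around (\ref{moregeneralmoment}) and Figure~\ref{fig:geo_interp4}. With $p=p_1+\cdots+p_k$, the conjugate pairings of the Gaussian entries across all $k$ circles are again indexed by $\pi\in\text{SP}_p$, and the induced vertex identifications $\rho(\pi)$ and the deterministic grouping $\sigma(\pi)$ are defined verbatim as in Definitions~\ref{thisdef} and~\ref{sigmadef}. The power-counting factors $N^{k(\rho(\pi))-kd(\rho(\pi))}$, $n^{l(\rho(\pi))-ld(\rho(\pi))}$ and $n^{|\sigma(\pi)|}$ are then produced in the same way as in Theorem~\ref{teo2}, since these exponents merely count blocks of $\rho(\pi)$ and $\sigma(\pi)$ of a given type, independently of how the vertices are distributed among the circles.

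At this point there are only two differences from Theorem~\ref{teo2}, just as in Theorem~\ref{recursive}. First, taking $k$ normalized traces instead of one replaces the overall normalization $\frac{1}{n}$ by $\frac{1}{n^k}$. Second, the deterministic product $\prod_i D_{|\sigma(\pi)_i|/2}$, with $\mathbf{D}$ and $\mathbf{E}$ replaced by $\mathbf{R}$ and $\mathbf{S}$, becomes $\prod_i \mathrm{tr}((\frac{1}{N}\mathbf{R}\mathbf{S}^H)^{l_i})$ with $l_i=|\sigma(\pi)_i|/2$; taking the outer expectation over $(\mathbf{R},\mathbf{S})$ turns this product of traces into precisely the mixed moment $R_{l_1,\ldots,l_r}$ appearing in the statement, where $r=|\sigma(\pi)|$. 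Collecting the factors yields (\ref{genformulasumhere}).

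The step I expect to require the most care is verifying that the single-circle bookkeeping of Theorem~\ref{teo2}, which already involves the more delicate partial-identification structure (the $\sigma$-blocks and the corrections $kd(\rho)$, $ld(\rho)$), genuinely transfers to the multi-circle setting without altering any exponent. Concretely, when a block of $\rho(\pi)$ merges vertices lying on different circles, one must confirm that $k(\rho)$, $l(\rho)$, $kd(\rho)$, $ld(\rho)$ and $|\sigma(\pi)|$ still record exactly the even-only blocks, odd-only blocks, those meeting $\mathcal{D}\cup(\mathcal{D}+1)$, and the $\sigma$-blocks, so that no spurious factor of $n$ or $N$ is gained or lost relative to the one-circle computation. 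Once this is checked, the remainder is the routine conditioning-and-Wick argument already established for Theorem~\ref{teo2}.
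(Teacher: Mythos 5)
Your proposal is correct and follows essentially the same route as the paper, which omits the proof precisely because it is the generalization of Theorem~\ref{teo2} obtained in the same way Theorem~\ref{teo1} was generalized to Theorem~\ref{recursive}: condition on the independent $({\bf R},{\bf S})$, run the multi-circle version of the Theorem~\ref{teo2} combinatorics with the $\frac{1}{n}$ replaced by $\frac{1}{n^k}$, and let the outer expectation turn the product of traces $\prod_i \mathrm{tr}\left(\left(\frac{1}{N}{\bf R}{\bf S}^H\right)^{|\sigma_i|/2}\right)$ into the mixed moment $R_{l_1,\ldots,l_r}$. The multi-circle bookkeeping concern you flag is exactly the point the paper handles via the setup around (\ref{moregeneralmoment}) and Figure~\ref{fig:geo_interp4}, and it goes through as you expect.
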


The proof is omitted, since it follows in the same way Theorem~\ref{teo1} was generalized to Theorem~\ref{recursive} above.
Deconvolution is also possible here. It is in fact simpler than for Theorem~\ref{recursive}, in that 
there is no need to form the inverse of a matrix~\cite{ryandebbah:optstacking}. 
This is explained further in the implementation presented in Section~\ref{software}.

\subsection{Selfadjoint Gaussian matrices}
The analogues of Theorem~\ref{teo1} and Theorem~\ref{teo2} when the Gaussian matrices instead are selfadjoint look as follows.
Since Theorem~\ref{teo1} and Theorem~\ref{teo2} had straightforward generalizations to the case where all matrices are random,
we will here assume from the start that all matrices are random:
\begin{theorem} \label{selfadjprod}
Assume that the $n\times n$ random matrix ${\bf R}$ is independent from the
$n\times n$ standard selfadjoint Gaussian matrix ${\bf X}$, and define
\begin{align*}
  R_{p_1,\ldots,p_k} &=
    \E\left[ \mathrm{tr}\left( {\bf R}^{p_1} \right) \mathrm{tr}\left( {\bf R}^{p_2} \right) \cdots \mathrm{tr}\left( {\bf R}^{p_k} \right) \right] \\
  M_{p_1,\ldots,p_k} &= \E\left[ \mathrm{tr}\left(\left( {\bf R}
  \frac{1}{\sqrt{n}}{\bf X} \right)^{p_1}\right) \right.\\
                  & \qquad \times \mathrm{tr}\left(\left( {\bf R} \frac{1}{\sqrt{n}}{\bf X} \right)^{p_2}\right) \cdots \\
                  & \qquad \left.\times \mathrm{tr}\left(\left( {\bf R}
                  \frac{1}{\sqrt{n}}{\bf X} \right)^{p_k}\right)\right].
\end{align*}
Set $p=p_1+\cdots+p_k$, and let $l_1,\ldots,l_r$ be the cardinalities of the blocks of $\rho_{sa}$.
We have that
\begin{equation} \label{selfadjprodformula}
  M_{p_1,\ldots,p_k} = \sum_{{ {\pi=\pi(\rho_1,\rho_2,q)\in
  \text{SP}_p}\atop{|\rho_1|=|\rho_2|=p/2}}\atop{\rho_1,\rho_2\ \text{disjoint}}} 2^{-p/2} n^{r-p/2-k} R_{l_1,\ldots,l_r}.
\end{equation}
\end{theorem}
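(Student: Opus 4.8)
The plan is to reduce the selfadjoint computation to the complex-Gaussian machinery underlying Theorem~\ref{teo1} by writing ${\bf X}=\frac{1}{\sqrt2}({\bf Y}+{\bf Y}^H)$ with ${\bf Y}$ an $n\times n$ standard complex Gaussian matrix. Substituting this into each of the $p$ occurrences of ${\bf X}$ inside the $k$ traces and multiplying out yields a sum of $2^p$ terms, each carrying the scalar $(\tfrac{1}{\sqrt2})^p=2^{-p/2}$ and replacing every ${\bf X}$ by one of ${\bf Y},{\bf Y}^H$; this already isolates the constant $2^{-p/2}$ in~(\ref{selfadjprodformula}). First I would fix one such choice and expand the resulting product of traces in the index form~(\ref{moregeneralmoment}), so that the $j$-th trace becomes a cyclic sum over $2p_j$ vertex indices with alternating ${\bf R}$- and ${\bf Y}^{(\pm)}$-entries; here the $\tfrac{1}{\sqrt n}$ factors contribute an overall $n^{-p/2}$ and the $k$ trace normalizations contribute $n^{-k}$.

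Next I would take the expectation over ${\bf Y}$, using its independence from ${\bf R}$ together with the fact that only conjugate pairings of complex Gaussian entries survive (the Wick rule exploited in~\cite{paper:haagerupthorbjornsen1}). A nonvanishing term requires equally many ${\bf Y}$'s and ${\bf Y}^H$'s, forcing $|\rho_1|=|\rho_2|=p/2$ with $\rho_1,\rho_2$ disjoint, where $\rho_1$ records the slots carrying ${\bf Y}^H$ and $\rho_2$ those carrying ${\bf Y}$; a pairing $q$ between them then determines $\pi=\pi(\rho_1,\rho_2,q)\in\text{SP}_p$, and summing over all admissible $\pi$ reproduces the summation range in~(\ref{selfadjprodformula}). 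The feature distinguishing this from Theorem~\ref{teo1} is that a ${\bf Y}$ may now be paired with any ${\bf Y}^H$ regardless of the parity of its position, so the identifications of the $2p$ vertex indices are governed by $\rho_{sa}(\pi)$ of Definition~\ref{olddef} rather than by the even--odd relation $\rho(\pi)$.

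With the pairing fixed, each covariance $\E[y(a,b)\overline{y(c,d)}]=\delta_{ac}\delta_{bd}$ collapses the vertex indices exactly as prescribed by the blocks of $\rho_{sa}$, after which the surviving ${\bf R}$-entries close up into $r=|\rho_{sa}|$ disjoint cyclic products. I would verify that these are precisely the non-normalized traces $\prod_i\mathrm{Tr}({\bf R}^{l_i})$, whose exponents $l_i$ are read off from the blocks of $\rho_{sa}$ as specified in the statement; passing to normalized traces rewrites this ${\bf R}$-contribution as $n^{r}R_{l_1,\ldots,l_r}$. Collecting the accumulated powers $2^{-p/2}$, $n^{-p/2}$, $n^{-k}$ and $n^{r}$ produces the summand $2^{-p/2}n^{r-p/2-k}R_{l_1,\ldots,l_r}$, as claimed in~(\ref{selfadjprodformula}).

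The main obstacle I expect is the bookkeeping of this last step: one must check that the index identifications imposed by the conjugate pairings, once propagated around all $k$ circles, group the $p$ occurrences of ${\bf R}$ into exactly $r=|\rho_{sa}|$ closed traces with the exponents dictated by $\rho_{sa}$, and that the free summations left after the Kronecker deltas are precisely the trace summations, so that no power of $n$ survives beyond the $n^{r}$ produced by the $\mathrm{Tr}\to\mathrm{tr}$ conversion. This is the selfadjoint analogue of the vertex-counting already carried out for Theorem~\ref{teo1} in Appendix~\ref{proofteo1}, the only genuinely new ingredient being the passage from even--odd pairings to arbitrary ones; accordingly I would model the argument closely on that appendix.
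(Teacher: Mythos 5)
Your proposal is correct and follows essentially the same route as the paper: the paper's own proof is just a sketch that decomposes ${\bf X}=\frac{1}{\sqrt{2}}({\bf Y}+{\bf Y}^H)$ and attributes the factors $2^{-p/2}$, $n^{-p/2}$, $n^{-k}$ and $n^{r}$ to exactly the same sources you identify (the $\frac{1}{\sqrt 2}$ normalizations, the $\frac{1}{\sqrt n}$ normalizations, the $k$ trace normalizations, and the $\mathrm{Tr}\to\mathrm{tr}$ conversion over the $r$ blocks of $\rho_{sa}$), with the Wick-pairing and vertex-identification machinery deferred to the arguments of Appendices A and B just as you do. The bookkeeping step you flag as the main obstacle is precisely the part the paper also leaves implicit, so your write-up is, if anything, more explicit than the published proof.
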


\begin{proof}
The proof follows in the same way as the proofs in Appendix~\ref{proofteo1} and~\ref{proofteo2}.
We therefore only give the following quick description on how the terms in (\ref{selfadjprodformula}) can be identified:
\begin{itemize}
  \item $2^{-p/2}$ comes from the $p$ normalizing factors $\frac{1}{\sqrt{2}}$
    in $\frac{1}{\sqrt{2}}({\bf Y}+{\bf Y}^H)$,
  \item $n^r$ comes from replacing the non-normalized traces with the
    normalized traces to obtain $R_{l_1,\ldots,l_r}$,
  \item $n^{-p/2}$ comes from the $p$
    normalizing factors $\frac{1}{\sqrt{n}}$ in ${\bf R} \frac{1}{\sqrt{n}}{\bf X}$,
  \item $n^{-k}$ comes from the $k$ traces taken in $M_{p_1,\ldots,p_k}$.
\end{itemize}
\end{proof}

Similarly, the result for sums involving selfadjoint matrices takes the following form:
\begin{theorem} \label{selfadjsum}
Let ${\bf X}$ be an $n\times n$ standard selfadjoint Gaussian matrix, and let ${\bf R}$ be $n\times n$ and independent from ${\bf X}$. Set
\begin{align*}
  R_{p_1,\ldots,p_k} &= \E\left[ \mathrm{tr}\left(
  \left(\frac{1}{\sqrt{n}}{\bf R} \right)^{p_1} \right) \right.\\
                  & \qquad \times \mathrm{tr}\left( \left(\frac{1}{\sqrt{n}}{\bf R} \right)^{p_2} \right) \cdots \\
                  & \qquad \left.\times \mathrm{tr}\left(
                  \left(\frac{1}{\sqrt{n}}{\bf R} \right)^{p_k} \right)
                  \right] \\
  M_{p_1,\ldots,p_k} &= \E\left[ \mathrm{tr}\left(\left(
  \frac{1}{\sqrt{n}}({\bf R}+{\bf X}) \right)^{p_1}\right) \right.\\
                  & \qquad \times \mathrm{tr}\left(\left( \frac{1}{\sqrt{n}}({\bf R}+{\bf X}) \right)^{p_2}\right) \cdots \\
                  & \qquad \left.\times \mathrm{tr}\left(\left(
                  \frac{1}{\sqrt{n}}({\bf R}+{\bf X})
                  \right)^{p_k}\right)\right].
\end{align*}
Set $p=p_1+\cdots+p_k$, and let $l_1,\ldots,l_r$ be the cardinalities of the blocks of $\sigma_{sa}$ from Definition~\ref{newsigmadef}.
We have that
\begin{align}
  M_{p_1,\ldots,p_k} &= \sum_{{ {\pi=\pi(\rho_1,\rho_2,q)\in
  \text{SP}_p}\atop{|\rho_1|=|\rho_2|\leq p/2}}\atop{\rho_1,\rho_2\ \text{
  disjoint} }} 2^{-|\rho_1|} n^{-|\rho_1|+|\rho(\pi)_{sa}|} \nonumber \\
                  & \qquad \qquad \qquad \times n^{-d(\rho(\pi)_{sa})-k+|\sigma_{sa}|} \nonumber \\
                  & \qquad \qquad \qquad \times R_{l_1,\ldots,l_r}. \label{selfadjsumformula}
\end{align}
\end{theorem}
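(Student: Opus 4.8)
The plan is to reduce this to the complex-Gaussian machinery already used for Theorem~\ref{teo2} and Theorem~\ref{recursivesum}, via the substitution trick that produced Theorem~\ref{selfadjprod}. First I would write the standard selfadjoint matrix as ${\bf X}=\frac{1}{\sqrt{2}}({\bf Y}+{\bf Y}^H)$ with ${\bf Y}$ standard complex Gaussian, substitute into each factor $\frac{1}{\sqrt{n}}({\bf R}+{\bf X})$, and multiply out. Writing the product of the $k$ normalized traces as in (\ref{tracewrittenout})--(\ref{moregeneralmoment}) produces a global prefactor $n^{-p/2}$ from the $p$ copies of $\frac{1}{\sqrt{n}}$ and $n^{-k}$ from passing from normalized to non-normalized traces. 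Each expanded term has every one of its $p$ edges labelled by one of ${\bf R}$, $\frac{1}{\sqrt{2}}{\bf Y}$, or $\frac{1}{\sqrt{2}}{\bf Y}^H$; a term with $|\rho_1|$ copies of ${\bf Y}^H$ and $|\rho_2|$ copies of ${\bf Y}$ carries the constant $2^{-(|\rho_1|+|\rho_2|)/2}$, which becomes $2^{-|\rho_1|}$ once the Wick pairing forces $|\rho_1|=|\rho_2|$.

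Second, I would take the expectation over ${\bf Y}$ using that only conjugate pairings of complex Gaussian entries survive. As explained before Definition~\ref{olddef}, the crucial difference from the complex case is that here the choice of ${\bf Y}$ versus ${\bf Y}^H$ on each edge is free (not dictated by an alternating ${\bf X}/{\bf X}^H$ pattern), so the surviving pairings are indexed by arbitrary disjoint subsets $\rho_1,\rho_2\subset\{1,\dots,p\}$ with $|\rho_1|=|\rho_2|$ together with a bijection $q$ between them; this is exactly the parametrization $\pi=\pi(\rho_1,\rho_2,q)\in\text{SP}_p$ appearing in (\ref{selfadjsumformula}). Each such pairing identifies matrix indices exactly as recorded by $\rho_{sa}=\rho_{sa}(\pi)$ of Definition~\ref{olddef}, and it groups the surviving deterministic (${\bf R}$) edges into closed cycles exactly as recorded by $\sigma_{sa}=\sigma_{sa}(\pi)$ of Definition~\ref{newsigmadef} --- the extra alternative ``$k\sim_{\rho_{sa}}l+1$'' in (\ref{eq2new}) being precisely what is needed because a ${\bf Y}$/${\bf Y}^H$ edge can now be traversed in either orientation.

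Finally I would count powers of $n$. The vertices fall into the $|\rho_{sa}|$ blocks of $\rho_{sa}$; of these, the $d(\rho_{sa})$ blocks meeting ${\cal D}\cup({\cal D}+1)$ are summed inside the ${\bf R}$-cycles, while the remaining $|\rho_{sa}|-d(\rho_{sa})$ blocks are free and each contribute a factor $n$. The deterministic edges assemble into $|\sigma_{sa}|$ traces, a block of $\sigma_{sa}$ of cardinality $l_i$ giving $\mathrm{Tr}({\bf R}^{l_i})$; rewriting $\prod_i\mathrm{Tr}({\bf R}^{l_i})$ in terms of the normalized quantity $R_{l_1,\dots,l_r}$ costs $n^{|\sigma_{sa}|}$ (one $n$ per trace) times $n^{(\sum_i l_i)/2}=n^{(p-2|\rho_1|)/2}$ (one $n^{1/2}$ per ${\bf R}$ absorbed into $\frac{1}{\sqrt{n}}{\bf R}$). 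Collecting $n^{-p/2}\cdot n^{-k}\cdot n^{|\rho_{sa}|-d(\rho_{sa})}\cdot n^{|\sigma_{sa}|}\cdot n^{p/2-|\rho_1|}$ yields exactly $n^{-|\rho_1|+|\rho_{sa}|}\,n^{-d(\rho_{sa})-k+|\sigma_{sa}|}$, matching (\ref{selfadjsumformula}). The main obstacle is not this exponent arithmetic but the combinatorial verification that the selfadjoint pairings are faithfully encoded by $\rho_{sa}$ and $\sigma_{sa}$: one must check that two deterministic edges lie in a common trace cycle if and only if they are joined by a $\sigma_{sa}$-path, and that the cardinalities $l_1,\dots,l_r$ read off from $\sigma_{sa}$ are the correct ${\bf R}$-powers --- this is where the orientation-free nature of selfadjoint pairing, absent in Appendices~\ref{proofteo1} and~\ref{proofteo2}, has to be handled with care.
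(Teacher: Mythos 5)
Your proposal is correct and follows essentially the same route as the paper: write ${\bf X}=\frac{1}{\sqrt{2}}({\bf Y}+{\bf Y}^H)$, reduce to conjugate pairings of complex Gaussian entries indexed by partial permutations $\pi(\rho_1,\rho_2,q)$ with $\rho_1,\rho_2$ disjoint, and identify each factor of $2$ and $n$ in (\ref{selfadjsumformula}) term by term via $\rho_{sa}$ and $\sigma_{sa}$. The only cosmetic difference is bookkeeping: you spread the $\frac{1}{\sqrt{n}}$ normalization over all $p$ edges and then return $n^{p/2-|\rho_1|}$ when passing to the normalized $R_{l_1,\ldots,l_r}$, whereas the paper counts $n^{-|\rho_1|}$ only for the $2|\rho_1|$ Gaussian edges because the deterministic edges' normalization is already absorbed into the definition of $R_{p_1,\ldots,p_k}$ --- the two accountings agree.
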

\begin{proof}
The items in (\ref{selfadjsumformula}) are identified as follows:
\begin{itemize}
  \item $2^{-|\rho_1|}$ comes from the normalizing factors
    $\frac{1}{\sqrt{2}}$ in the $2|\rho_1|$ choices of
    $\frac{1}{\sqrt{2}}({\bf Y}+{\bf Y}^H)$,
  \item $n^{-|\rho_1|}$  comes from the normalizing
    factors $\frac{1}{\sqrt{n}}$ in the $2|\rho_1|$ choices of
    $\frac{1}{\sqrt{n}}{\bf X}$,
  \item $n^{|\rho(\pi)_{sa}|-d(\rho(\pi)_{sa})}$
    comes from counting the vertices which do not come from applications of (\ref{eq2new}),
  \item $n^{-k}$ comes from the $k$ traces taken in $M_{p_1,\ldots,p_k}$,
  \item $n^{|\sigma_{sa}|}$ comes from replacing the non-normalized traces with the
    normalized traces to obtain $R_{l_1,\ldots,l_r}$.
\end{itemize}
\end{proof}

Recursive application of theorems~\ref{recursive},~\ref{recursivesum},~\ref{selfadjprod}, and~\ref{selfadjsum},
allows us to compute moments of most combinations of independent (selfadjoint or complex) Gaussian random matrices and deterministic matrices, in any order,
and allows for deconvolution in the way explained.
This type of flexibility makes the method of moments somewhat different from that of the Stieltjes transform, 
where expressions grow more complex, when the model grows more complex. 
Moreover, contrary to methods based on the Stieltjes transform, the results scale in terms of the number of moments: 
from a given number of moments, they enable us to compute the same number of output moments. 
The theorems also enable us to compute second order moments (i.e., covariances of traces) for many types of matrices, using the same type of results.
Asymptotic properties of such second order moments have previously been studied~\cite{secondorderfreeness1,secondorderfreeness2,secondorderfreeness3}.
While previous papers allow us to compute such moments and second order moments asymptotically, in many cases the exact result is needed.

\section{Software implementation} \label{software}
Theorems~\ref{recursive},~\ref{recursivesum},~\ref{selfadjprod}, and~\ref{selfadjsum} present rather complex formulas. 
However, it is also clear that they are implementable: all that is required is traversing subsets ($\rho_1,\rho_2$), permutations ($\pi,q$), 
and implement the equivalence relations $\rho(\pi),\sigma(\pi),\rho(\pi)_{sa},\sigma(\pi)_{sa}$ from $\pi$.
Code in Matlab for doing so has been implemented for this paper~\cite{supelec:finitegaussian}, as well as the equivalence relations we have defined. 
Also, the implementation stores results from traversing all partitions in matrices, and this traversal is performed only once.
Our formulas are thus implemented by multiplying the vectors of mixed moments with a precomputed matrix. 
These operations are also vectorized, so that they can be applied to many observations simultaneously 
(each vector of mixed moments is stored as a column in a matrix, and one larger matrix multiplication is performed). 
Representing the operations through matrices also addresses more complex models, since many steps of matrix multiplication are easily combined.
In~\cite{rmtdoc}, documentation of all public functions in this library can be found, 
as well as how our methods for Gaussian matrices can be combined with other types of matrices.
The software can also generate formulas directly in \LaTeX, in addition to performing the convolution or deconvolution numerically in terms of a set of input moments.
All formulas in this section have in fact been automatically generated by this implementation. 
For products, we have written down the matrices needed for convolution and deconvolution, as described previously. 
For sums, we have only generated the expressions for the first moments. 
Due to the complexity of the expressions, it is not recommended to compute these by hand.

\subsection{Automatically generated formulas for theorems~\ref{recursive} and~\ref{teo2}}
We obtain the following expression for the first three moments in Theorem~\ref{recursive},
where $R_{p_1,...,p_k}$ (we consider only one-sided correlated Wishart matrices) and $M_{p_1,\ldots,p_k}$ are as in that theorem:
\begin{eqnarray*}
  M_1                                                                   &=& \begin{array}{c} R_1 \end{array} \\
  \left(\begin{array}{c} M_2 \\ M_{1,1} \end{array} \right)             &=& 
    \left(\begin{array}{cc} 1 & c \\ \frac{1}{cN^2} & 1 \end{array} \right) \left(\begin{array}{c} R_2 \\ R_{1,1} \end{array} \right) \\
  \left(\begin{array}{c} M_3 \\ M_{2,1} \\ M_{1,1,1} \end{array}\right) &=& 
    \left(\begin{array}{ccc} 1+\frac{1}{N^2} & 3c & c^2 \\ \frac{2}{cN^2} & 1+\frac{2}{N^2} & c \\ \frac{2}{c^2N^4} & \frac{3}{cN^2} & 1 \end{array} \right) \left(\begin{array}{c} R_3 \\ R_{2,1} \\ R_{1,1,1}\end{array} \right).
\end{eqnarray*}
More generally, in order to compute the moments of products of Wishart matrices, 
we need to compute matrices as above for the different sizes of the different Wishart matrices, and multiply these.
In Section~\ref{simulations}, we will see an example where two Gaussian matrices are multiplied.
Note that the matrices from above are not invertible when $N=1$. 

Defining
\begin{align*}
  D_p &= \mathrm{tr}\left(\left(\frac{1}{N}{\bf D}{\bf D}^H\right)^p\right) \\
  M_p &= \E\left[ \mathrm{tr}\left(\left( \frac{1}{N} ({\bf D}+{\bf X})({\bf D}+{\bf X})^H \right)^p\right) \right],
\end{align*}
in accordance with Theorem~\ref{teo2}, the implementation generates the following formulas:
\begin{eqnarray*}
  M_{1} &=& D_{1} +1\\
  M_{2} &=& D_{2} +\left(2+2c\right)D_{1} +\left(1+c\right)\\
  M_{3} &=& D_{3} +\left(3+3c\right)D_{2}\\
        & & +3cD_{1}^{2}+\left(3+9c+3c^{2}+\frac{3}{N^{2}}\right)D_{1}\\
        & & +\left(1+3c+c^{2}+\frac{1}{N^{2}}\right)\\
  M_{4} &=& D_{4} +\left(4+4c\right)D_{3} +8cD_{2}D_{1} \\
        & & +\left(6+16c+6c^{2}+\frac{16}{N^{2}}\right)D_{2}\\
        & & +\left(14c+14c^{2}\right)D_{1}^{2}\\
        & & +\left(4+24c+24c^{2}+4c^{3}+\frac{20+20c}{N^{2}}\right)D_{1}\\
        & & +\left(1+6c+6c^{2}+c^{3}+\frac{5+5c}{N^{2}}\right)
\end{eqnarray*}

\subsection{Automatically generated formulas for theorems~\ref{selfadjprod} and~\ref{selfadjsum}}
We obtain the following expression for the first four moments in Theorem~\ref{selfadjprod},
where $R_{p_1,...,p_k}$ and $M_{p_1,\ldots,p_k}$ are as in that theorem:
\begin{eqnarray*}
  \left(\begin{array}{c} M_2 \\ M_{1,1} \end{array} \right) 
  &=& \left(\begin{array}{cc} 0 & 1 \\ \frac{1}{n^2} & 0 \end{array} \right) 
      \left(\begin{array}{c} R_2 \\ R_{1,1} \end{array} \right) \\
  \left(\begin{array}{c} M_4 \\ M_{2,2} \\ M_{3,1} \\ M_{2,1,1} \\ M_{1,1,1,1} \end{array} \right)
  &=& \left( \begin{array}{ccccc} \frac{1}{n^2} & 0 & 0 & 2 & 0 \\
                                  0 & \frac{1}{n^2} & 0 & 0 & 1 \\
                                  0 & 0 & \frac{3}{n^2} & 0 & 0 \\
                                  \frac{1}{n^4} & 0 & 0 & \frac{1}{n^2} & 0 \\
                                  0 & \frac{3}{n^4} & 0 & 0 & 0 \end{array} \right)
      \left(\begin{array}{c} R_4 \\ R_{2,2} \\ R_{3,1} \\ R_{2,1,1} \\ R_{1,1,1,1} \end{array} \right)
\end{eqnarray*}
(since $M_1=M_3=M_{2,1}=M_{1,1,1}=0$). 
The implementation is also able to generate the expected moments of 
the product of any number deterministic matrices, independent, selfadjoint (or complex) Gaussian matrices, in any order~\cite{rmtdoc}. 
This is achieved by constructing the matrices for the selfadjoint and complex cases as above, 
and multiplying the corresponding matrices together in the right order.

Defining
\begin{align*}
  D_p &= \mathrm{tr}\left(\left( \frac{1}{\sqrt{n}} {\bf D} \right)^p\right) \\
  M_p &= \E\left[ \mathrm{tr}\left(\left( \frac{1}{\sqrt{n}}({\bf D}+{\bf X}) \right)^p\right)\right],
\end{align*}
in accordance with Theorem~\ref{selfadjsum}, the implementation generates the following formulas:
\begin{eqnarray*}
  M_{1} &=& D_{1} \\ 
  M_{2} &=& D_{2} +1\\
  M_{3} &=& D_{3} +3D_{1}\\
  M_{4} &=& D_{4} +4D_{2}+2D_{1}^{2}+\left(2+n^{-2}\right).
\end{eqnarray*}

\section{Applications}\label{simulations}
In this section, we consider some wireless communications examples where the presented inference framework is used.

\subsection{MIMO rate estimation} \label{MIMORate}
In many MIMO (Multiple Input Multiple Output) antenna based sounding and MIMO channel modelling applications, 
one is interested in obtaining an estimator of the rate in a noisy and mobile environment.
In this setting, one has $M$ noisy observations of the channel ${\bf Y}_i={\bf D}+\sigma{\bf N}_i$,
where ${\bf D}$ is an $n\times N$ deterministic channel matrix, 
${\bf N}_i$ is an $n\times N$ standard, complex, Gaussian matrix representing the noise, 
and $\sigma$ is the noise variance.
The channel ${\bf D}$ is supposed to stay constant during $M$ symbols. The rate estimator is given by
\begin{align}
  C &= \frac{1}{n}\log_2 \det\left( {\bf I}_n+\frac{\rho}{N} {\bf
  D}{\bf D}^{H}\right) \nonumber\\
  &= \frac{1}{n}\log_2\left(\prod_{i=1}^n(1+ \rho\lambda_i)\right), \label{channelcapacitydef}
\end{align}
where $\rho=\frac{1}{\sigma^2}$ is the SNR, and $\lambda_i$ are the eigenvalues of $\frac{1}{N}{\bf D}{\bf D}^H$.
This problem falls within the framework we are proposing. The extra parameter $\sigma$ did not appear in any of the main theorems. 
In~\cite{supelec:finitegaussian}, it is explained how this is handled by the implementation using our results. 

We would like to infer on the capacity using our moment-based framework. We are not able to find an unbiased estimator for the capacity from the moments 
due to the logarithm in (\ref{channelcapacitydef}), but we will however explain how we can obtain an unbiased estimator for the expression $\prod_{i=1}^n(1+ \rho\lambda_i)$ 
used in (\ref{channelcapacitydef}). This is simplest when a limitation on the rank, $\mathrm{rank}({\bf D}{\bf D}^H)\leq k$, is known
\footnote{In~\cite{eurecom:channelcapacity}, 
the rate was also estimated, but without actually using unbiased estimators for products of traces, as formulated in Section~\ref{section:theorems}}.
On the assumption of such a limitation, we can write $\prod_{i=1}^n(1+ \rho\lambda_i)=1+\sum_{r=1}^n \rho^r\Pi_r(\lambda_1,\cdots,\lambda_n)$, where
\begin{align*}
\Pi_1(\lambda_1,\cdots,\lambda_n)&=\lambda_1 +\cdots +\lambda_n  \\
\Pi_2(\lambda_1,\cdots,\lambda_n)&=\sum_{1\leq i<j\leq n}\lambda_i \lambda_j  \\
&\vdots&  \\
\Pi_n(\lambda_1,\cdots,\lambda_n)&=\lambda_1 \cdots \lambda_n
\end{align*}
are the elementary symmetric polynomials. 
With $D_n$ the moments of $\frac{1}{N}{\bf D}{\bf D}^H$, and $D_{\rho}$ as in Definition~\ref{ddef2}, 
The Newton-Girard formulas~\cite{book:programmingmath} (slightly rewritten) say that we can find coefficients $a_{\rho}$ so that
\[
  \Pi_k(\lambda_1,\cdots,\lambda_n) = \sum_{\rho\in{\cal P}(k)} a_{\rho} D_{\rho}.
\]
In Section~\ref{section:theorems} we explained how we can obtain unbiased estimators for the $D_{\rho}$ on the right hand side from the noisy observations ${\bf Y}_i$. 
We can thus also obtain unbiased estimators for the $\Pi_k(\lambda_1,\cdots,\lambda_n)$. 
Due to the rank restriction, only $k$ of the $\lambda_i$ are nonzero, so that only $\Pi_1,\Pi_2,...,\Pi_k$ can be nonzero. 
We thus obtain an unbiased estimator for $\prod_{i=1}^n(1+ \rho\lambda_i)$, since this can be written as a linear combination of the $\Pi_i$. 
In the following, all rate estimations will follow this strategy by first computing an unbiased estimate for $\prod_{i=1}^n(1+ \rho\lambda_i)$, 
and substituting this in (\ref{channelcapacitydef}). 
As with Theorems~\ref{recursive} and~\ref{recursivesum}, such an estimator thus scales in terms of the moments:
it depends on the first $k$ moments of the observations only, once the restriction $\mathrm{rank}({\bf D}{\bf D}^H)\leq k$ is known. 

The inference methods in Section~\ref{section:theorems} are formulated for the case of one observation only.
When we have many observations, we have some freedom in how they are combined into new estimators:
\begin{enumerate}
  \item we can form the average $\frac{1}{L}\sum_{i=1}^L ({\bf D}+\sigma{\bf N}_i)$ of the observations, and use that this has the same statistical 
    properties as ${\bf D}+\frac{\sigma}{\sqrt{L}}{\bf N}$, with ${\bf N}$ again standard, complex, Gaussian,
  \item we can stack the observations into a compund observation matrix. 
    In~\cite{ryandebbah:optstacking} it was shown how such matrices can be included in the same inference framework, so that our methods also apply to them,
  \item we can take the average of the moments we obtain from applying the framework to each observation separately.
\end{enumerate}
In~\cite{ryandebbah:optstacking}, the variances of the estimators for the moments are analyzed,
and it is shown that the two first strategies above provide lower variance than the third strategy,
and that the first two strategies have comparable variances. 
We will therefore in the following apply the framework with the first strategy. 

We have tested two cases. First a $2\times 2$-matrix
\begin{equation} \label{caseD1}
  {\bf D} = \left( \begin{array}{cc} 1 & 0 \\ 0 & 0.5 \end{array} \right)
\end{equation}
was used with $\rho=5$ and different number of observations. The corresponding simulation is shown in Figure~\ref{fig:case1}.
\begin{figure}
  \begin{center}
    \epsfig{figure=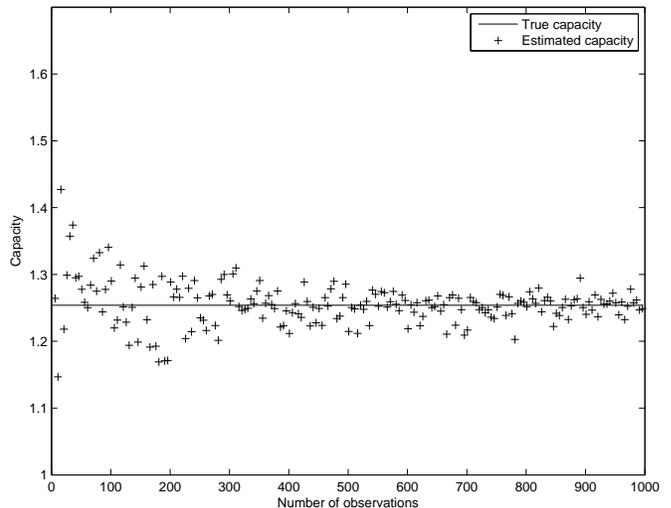,width=0.99\columnwidth}
  \end{center}
  \caption{Estimation of the channel capacity using the method of moments for the $2\times 2$-matrix (\ref{caseD1})
           for various number of observations. $\rho=5$.}\label{fig:case1}
\end{figure}
The fact that the channel matrix is diagonal is irrelevant for the rate estimation. In the second case a $4\times 4$-matrix
\begin{equation} \label{caseD2}
  {\bf D} = \left( \begin{array}{cccc} 1 & 0 & 0 & 0 \\ 0 & 0.5 & 0 & 0 \\ 0 & 0 & 2 & 0 \\ 0 & 0 & 0 & 1 \end{array} \right)
\end{equation}
was used, with $\rho=10$ and different number of observations. The corresponding simulation is shown in Figure~\ref{fig:case2}. 
In the second case, the number of variables to be estimated is higher than in the $2\times 2$-matrix case (4 eigenvalues instead of 2). 
In general one should then expect that more symbols are needed in order to obtain the same accuracy in the estimation. 
Although the figures partially confirm this, the different matrix sizes in the two cases makes the situation somewhat more involved 
(the moments converge faster for matrices of larger size).
\begin{figure}
  \begin{center}
    \epsfig{figure=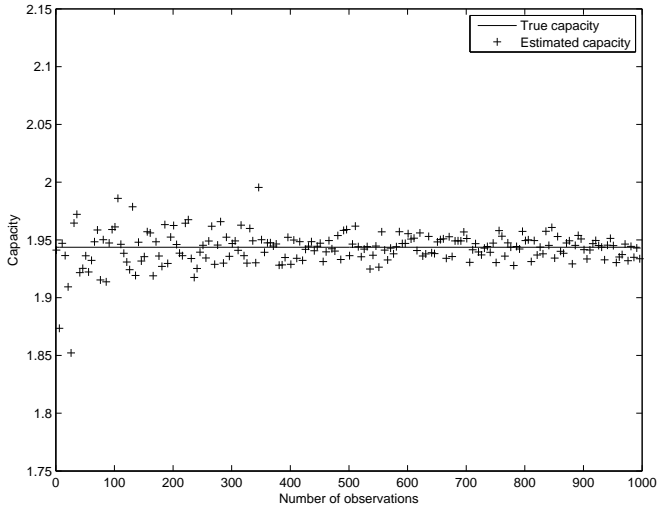,width=0.99\columnwidth}
  \end{center}
  \caption{Estimation of the channel capacity using the moment method for the $4\times 4$-matrix (\ref{caseD2})
           for various number of observations. $\rho=10$.}\label{fig:case2}
\end{figure}

\subsection{Understanding the network in a finite time}
In cognitive MIMO  Networks, one must learn and control
the \lq\lq black box\rq\rq\ (wireless channel for example) with multiple inputs and multiple outputs (Figure \ref{Feedback}) 
within a fraction of time and with finite energy. The fraction of time constraint is due to the fact that the channel (black box) changes over time. 
Of particular interest is the estimation of the rate within the window of observation.
\begin{figure}
  \begin{center}
    \epsfig{figure=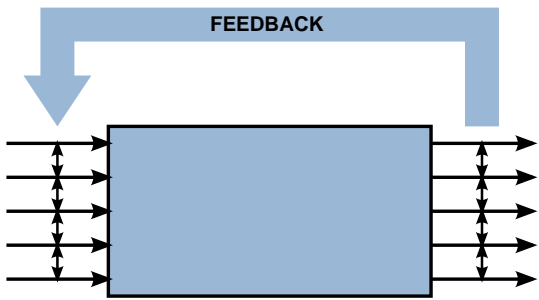,width=0.80\columnwidth}
  \end{center}
  \caption{Cognitive MIMO  Networks}\label{Feedback}
\end{figure}

Let ${\bf  y}$ be the output vector, ${\bf x}$ and ${\bf n}$ respectively the input signal and the noise vector, so that
\begin{equation} \label{DXplusN}
{\bf  y}={\bf x}+\sigma {\bf n}.
\end{equation}
In the Gaussian case, the rate is given by
\begin{align*}
C&=H({\bf y})-H({\bf y}|{\bf x})\\
&=\log_2 \det(\pi e {\bf R}_Y)-\log_2 \det(\pi e {\bf R}_N)\\
&=\log_2 \left(\frac{\det({\bf R}_Y)}{\det({\bf R}_N)}\right)
\end{align*}
where ${\bf R}_Y$ is the covariance of the output signal and ${\bf R}_N$ is the covariance of the noise. 
Therefore, one can fully  describe  the information transfer in the system knowing only the eigenvalues of ${\bf R}_Y$ and ${\bf R}_N$. 
Unfortunately, the receiver has only access to a limited number of $L$ observations of ${\bf y}$, and not the covariance of ${\bf R}_Y$. 
However, in the case where ${\bf x}$ and ${\bf n}$ are Gaussian vectors, 
${\bf y}$ can be written as ${\bf y}={\bf R}_Y^{\frac{1}{2}} {\bf u}$ where ${\bf u}$ is an i.i.d standard Gaussian vector. 
The problems falls therefore in the realm  of inference with  a correlated Wishart model 
($ \frac{1}{L}\sum_{i=1}^L {\bf y}_i {\bf y}_i^H={\bf R}_Y^{\frac{1}{2}} \frac{1}{L} \sum_{i=1}^L {\bf u}_i {\bf u}_i^H{\bf R}_Y^{\frac{1}{2}}$) .

In the simulation we have taken ${\bf n}$ as an i.i.d. standard Gaussian vector of dimension $2$, and
\begin{equation} \label{rydef}
  {\bf R}_X = \left( \begin{array}{cc} 1 & 0 \\ 0 & 0.5^2 \end{array} \right),
\end{equation}
and have used Theorem~\ref{recursivesum} to take care of the additive part, 
following up with Theorem~\ref{recursive} to take care of the Gaussian part of ${\bf x}$. 
Considering $L$ observations of (\ref{DXplusN}), we unfortunately can't use the same procedure  
as in Section~\ref{MIMORate} (i.e. averaging the observation vectors first), 
since the matrices corresponding to (\ref{recursiveformula}) are not invertible for $N=1$. 
Instead we have stacked the observations as columns in a compound matrix, and applied the framework to this 
in order to get an unbiased estimate of the moments of ${\bf R}_X$. 
In Figure~\ref{capacityDXplusN}, we have followed the same procedure 
as explained in Section~\ref{MIMORate} for estimating the capacity from these moments.
To demonstrate the convergence to the true rate, we have also increased the number of observations.
\begin{figure}
  \begin{center}
    \epsfig{figure=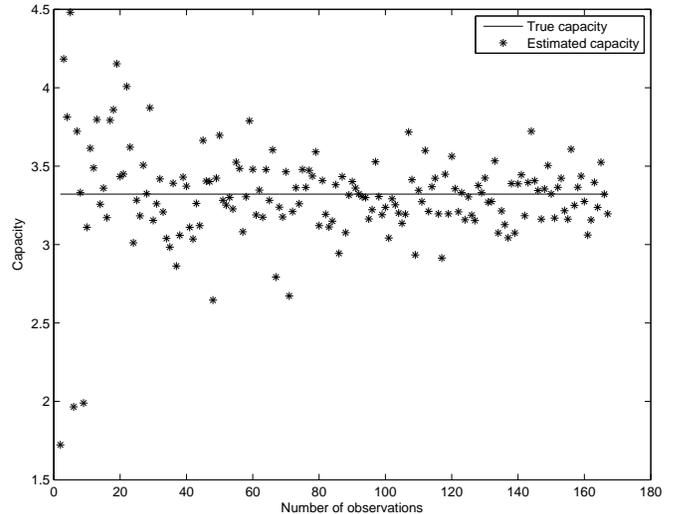,width=0.99\columnwidth}
  \end{center}
  \caption{Estimation of the capacity for the model (\ref{DXplusN}) up to $L=500$ observations, with $\sigma =0.5$.}\label{capacityDXplusN}
\end{figure}
In order to also estimate the eigenvalues of ${\bf R}_X$, 
we can first get unbiased estimates for the elementary symmetric polynomials as in section~\ref{MIMORate}, 
hence also for the characteristic equation of ${\bf R}_X$, and solve this.
Similarly to the case for the capacity, is is only the estimate for the characteristic equation which is unbiased, 
not the estimates for the eigenvalues themselves.
In Figure~\ref{simDX1plusX2_scenario1} we have shown the estimates for the eigenvalues of ${\bf R}_X$ obtained in this way.
\begin{figure}
  \begin{center}
    \epsfig{figure=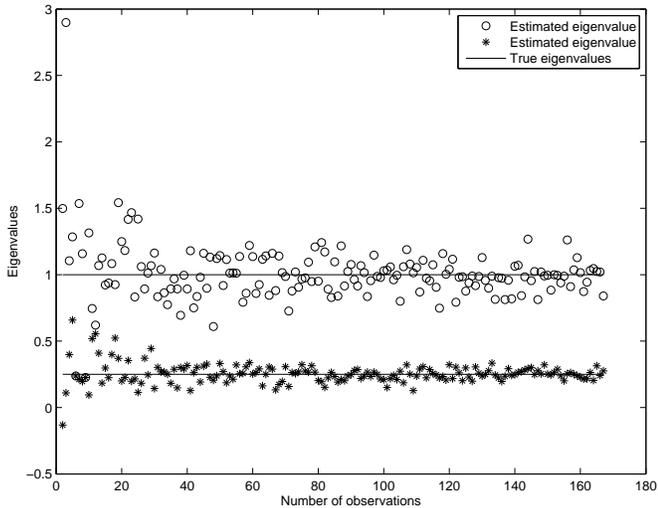,width=0.99\columnwidth}
  \end{center}
  \caption{Estimation of the eigenvalues for the $2\times 2$-matrix ${\bf R}_X$ of (\ref{rydef})
           for various number of observations.}\label{simDX1plusX2_scenario1}
\end{figure}

\subsection{Power estimation} \label{PowerEstimation}
Under the assumption of a large number of observations, our finite dimensional inference framework was not strictly needed in the two previous examples:
the observations could instead be stacked into a larger matrix, where asymptotic results are more applicable. 
When the asymptotic result can be used, inference in terms of the moments becomes simpler,
due to the almost sure convergence of the empirical eigenvalue distributions of the matrices~\cite{book:hiaipetz}.
In the asymptotic regime, Theorems~\ref{teo1},~\ref{teo2}, and~\ref{selfadjsum}
can in fact all be implemented by direct application of additive free- and multiplicative free convolution, 
and the moment-cumulant formula~\cite{book:comblect},
for which efficient implementations exists~\cite{eurecom:freedeconvinftheory}, without the need for iterating through all partitions.
Theorem~\ref{selfadjprod} can be implemented in terms of the $S$-transform~\cite{paper:vomult},
which has an implementation in terms of power series~\cite{book:tulinoverdu}, also without the need for iterating through all partitions.

This section describes a third model, where it is unclear how to apply such a stacking strategy, making the finite dimensional results more useful.
In many multi-user MIMO applications, one needs to determine the power with which the users send information. 
We consider the system given by
\begin{equation}\label{modelpower}
{\bf y}_i={\bf W}{\bf P}^{\frac{1}{2}}{\bf s}_i+\sigma{\bf n}_i
\end{equation}
where ${\bf W}$, ${\bf P}$, ${\bf s}_i$, and ${\bf n}_i$ are respectively the $N\times K$ channel gain matrix, 
the $K\times K$ diagonal power matrix due to the different distances from
which the users emit, the $K\times 1$ matrix of signals and the $N\times 1$ matrix representing the noise with variance $\sigma$. 
In particular, ${\bf W}, {\bf s}_i, {\bf n}_i$ are independent standard, complex, Gaussian matrices and vectors. 
We suppose that we have $M$ observations of the received signal ${\bf y}_i$, during which the channel gain matrix stays constant.
Considering the  $2\times 2$-matrix
\begin{equation} \label{power}
  {\bf P}^{\frac{1}{2}} = \left( \begin{array}{cc} 1 & 0 \\ 0 & 0.5 \end{array} \right),
\end{equation}
applying Theorem~\ref{recursivesum} first, and then Theorem~\ref{recursive} twice (each application takes care of one Gaussian matrix),
we can estimate the moments of the matrix ${\bf P}$ from the moments of the matrix ${\bf Y}{\bf Y}^H$, 
where ${\bf Y}=[{\bf y}_1,\ldots,{\bf y}_M]$ is the compound observation matrix. 
We assume that we have an increasing number of observations ($L$) of the matrix ${\bf Y}$, and take an average of the estimated moments 
(we average across several block fading channels). 
From the estimated moments of ${\bf P}$ we can then estimate its eigenvalues as in Section~\ref{MIMORate}.
When $L$ increases, we get a prediction of the eigenvalues which is closer to the true eigenvalues of ${\bf P}$.
Figure~\ref{simX1DX2plusX3_ObservationIncreased} illustrates the estimation of eigenvalues up to $L=1200$ observations.
\begin{figure}
  \begin{center}
    \epsfig{figure=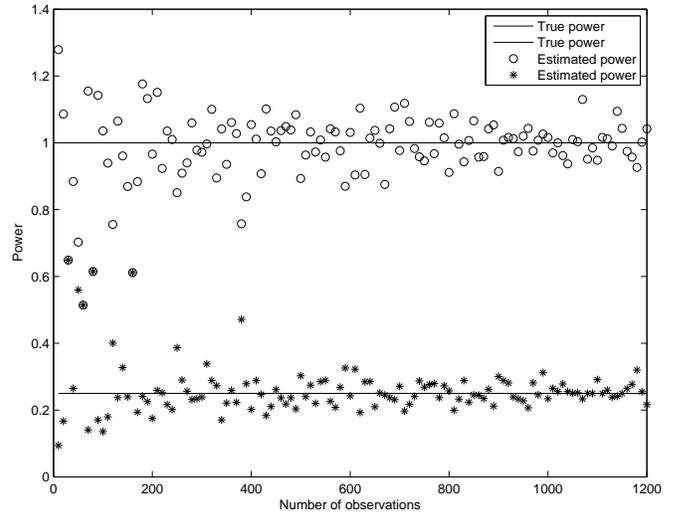,width=0.99\columnwidth}
  \end{center}
  \caption{Estimation of the powers  for the model (\ref{modelpower}),
  where the number $L$ of observations is increased, the sizes of the
  matrices are $K=N=M=2$ and $\sigma=0.1$. The actual powers are 0.25 and 1.
           }\label{simX1DX2plusX3_ObservationIncreased}
\end{figure}

It is possible to compute the variance of the moment estimators for the model (\ref{modelpower}).
We do not write down expressions for these, but remark that the framework is capable of performing this tedious task. 
These expressions turn out to involve combinations of $K$, $M$, and $N$ in the denominators,
so that in order for the variance to be low, large values for $K,M,N$ are required. 
In Figures~\ref{simX1DX2plusX3_SizeIncreased_L15}~ and~\ref{simX1DX2plusX3_SizeIncreased_L150}, 
we note that the variance decreases much faster when
we increase $K,M,N$ jointly, than when we increase the number of observations.

\begin{figure}
  \begin{center}
    \epsfig{figure=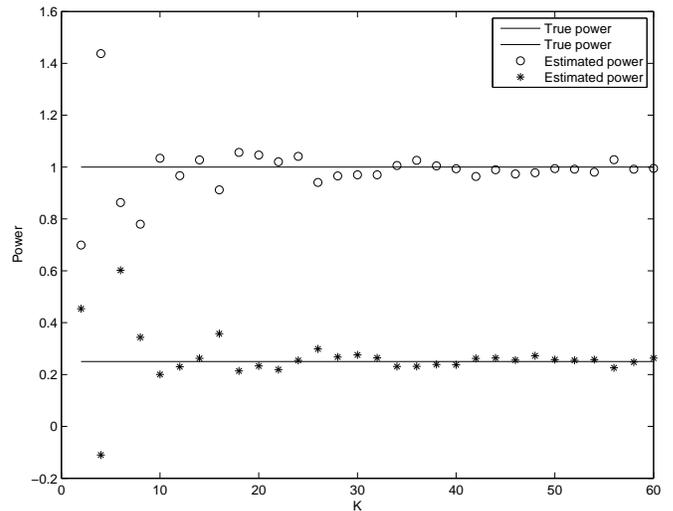,width=0.99\columnwidth}
  \end{center}
  \caption{Estimation of the powers  for the model (\ref{modelpower}), where the size $K=N=M$ of the matrices is increased, the number of observations is fixed $L=15$ and $\sigma=0.1$. The actual powers are 0.25 and 1.}\label{simX1DX2plusX3_SizeIncreased_L15}
\end{figure}

\begin{figure}
  \begin{center}
    \epsfig{figure=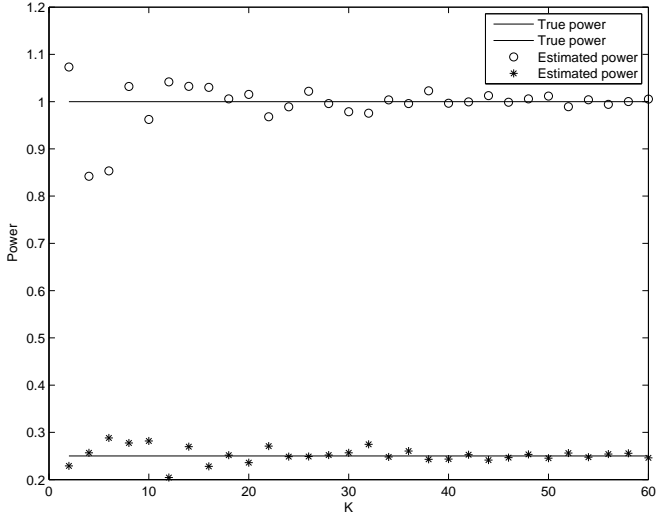,width=0.99\columnwidth}
  \end{center}
  \caption{Estimation of the powers  for the model (\ref{modelpower}), where the size $K=N=M$ of the matrices is increased, the number of observations is fixed $L=50$ and $\sigma=0.1$. The actual powers are 0.25 and 1.}\label{simX1DX2plusX3_SizeIncreased_L150}
\end{figure}

\section{Conclusion and further work}
In this paper, we have introduced a framework which enables us to compute the moments of many types of combinations of 
independent Gaussian- and Wishart random matrices, without any assumptions on the matrix dimensions. 
We also explained an accompanying software implementation, and also some useful applications where the framework has been used for simulations.

Future work will focus on applying and extending the framework to other types of matrix models. 
It may also be possible to extend the framework to obtain not only the moments we consider, but also the {\em negative moments}~\cite{paper:burda}.

While the formulas presented here have been generated by traversing sets of partitions, there may exist expressions for the same formulas which are more 
efficient to compute, as has been found at least in one case~\cite{paper:tucci1}. 
Future work will also attempt to find such simpler expressions. This is a must if the method of moments needs to compute 
moments of order much higher than used here. 

Since the method of moments only encodes information about the lower order moments, 
it lacks much information which is encoded naturally into Stieltjes transform, so that  spectrum estimation based on the Stieltjes transform 
has much better performance when few moments are considered. 
Once one can find simpler expressions for higher order moments, an interesting project would be to find how many moments are typically needed 
in order for the method of moments to perform close to the Stieltjes transform method.
methods. 

\appendices

\section{The proof of Theorem~\ref{teo1}} \label{proofteo1}
In order to prove Theorem~\ref{teo1}, we will expand the moments  
\begin{equation}\label{moments}
\E\left[\mathrm{tr}\left(\bf{D}\bf{X}{\bf E}\bf{X}^H\right)^p\right]
\end{equation}
following in the footsteps of~\cite{paper:haagerupthorbjornsen1}, 
and in the process generalize results therein, since no deterministic part was involved in that paper. 
We will thus in the following rewrite some of the important parts in the proofs in~\cite{paper:haagerupthorbjornsen1}, 
since these are needed in our generalizations. 
First, we will need the following proposition.  

\begin{proposition}\label{prop1}
Let ${\bf X}$ be $n\times N$ standard, complex, Gaussian, and ${\bf D}$ a deterministic $n\times n$ matrix.
Let $p$ be a positive integer, then
\begin{eqnarray*}\label{thesis}
\lefteqn{ \E\left[\mathrm{tr}\left({\bf D}{\bf X}{\bf E}{\bf X}^H\right)^p\right] }\\
&=& \sum_{\pi \in S_p}\E\left[\mathrm{tr}\left({\bf D}{\bf X}_{1}{\bf E}{\bf X}_{\pi(1)}^H\cdots{\bf D}{\bf X}_p{\bf E}{\bf X}_{\pi(p)}^H\right)\right]
\end{eqnarray*}
where ${\bf X}_1,\ldots,{\bf X}_p$  are independent $n\times N$
standard, complex, Gaussian matrices.
\end{proposition}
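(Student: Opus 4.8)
The plan is to expand both sides in terms of matrix entries as in (\ref{tracewrittenout}) and to compare them term by term using the Wick (Isserlis) formula for complex Gaussian entries. First I would write the left-hand side $\E\left[\mathrm{tr}\left((\mathbf{D}\mathbf{X}\mathbf{E}\mathbf{X}^H)^p\right)\right]$ as a sum over matrix indices of a product of $2p$ Gaussian entries --- the $p$ factors coming from the copies of $\mathbf{X}$ and the $p$ conjugate factors coming from the copies of $\mathbf{X}^H$ --- weighted by entries of the deterministic matrices $\mathbf{D}$ and $\mathbf{E}$. Since $\mathbf{D}$ and $\mathbf{E}$ are non-random, they factor out of the expectation as scalar coefficients in the index sum and play no role in the Gaussian averaging; only the product of the $2p$ Gaussian entries is subject to expectation.

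Next I would invoke the fundamental fact that, for a standard complex Gaussian matrix, the only nonvanishing pairwise covariances are the conjugate ones, $\E\left[x_{ab}\overline{x_{cd}}\right]=\delta_{ac}\delta_{bd}$, whereas $\E\left[x_{ab}x_{cd}\right]=0$; this is exactly the observation that one need only consider conjugate pairings of Gaussian elements~\cite{paper:haagerupthorbjornsen1}. By Wick's theorem the expectation of a product of $2p$ zero-mean jointly Gaussian entries equals the sum, over all bijective pairings of the $p$ factors of $\mathbf{X}$ with the $p$ factors of $\mathbf{X}^H$, of the product of the corresponding covariances. Because such pairings between the $p$ occurrences of $\mathbf{X}$ and the $p$ occurrences of $\mathbf{X}^H$ are in natural one-to-one correspondence with $S_p$, this produces a sum indexed by $\pi\in S_p$, each summand being an index sum constrained by the Kronecker deltas dictated by $\pi$ and weighted by the entries of $\mathbf{D}$ and $\mathbf{E}$.

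It then remains to identify, for each fixed $\pi$, this summand with the right-hand side term $\E\left[\mathrm{tr}\left(\mathbf{D}\mathbf{X}_1\mathbf{E}\mathbf{X}_{\pi(1)}^H\cdots\mathbf{D}\mathbf{X}_p\mathbf{E}\mathbf{X}_{\pi(p)}^H\right)\right]$. In that term each independent matrix $\mathbf{X}_i$ occurs exactly once and each $\mathbf{X}_j^H$ occurs exactly once; by independence the only surviving Wick contraction is the one pairing each $\mathbf{X}_i$ with its own conjugate $\mathbf{X}_i^H$, so the entire expectation collapses to the contribution of a single pairing. A short bookkeeping check shows that this contribution reproduces precisely the summand of the left-hand expansion associated with $\pi$ (equivalently with $\pi^{-1}$, which is immaterial after summation over the whole group). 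Summing over all $\pi\in S_p$ then recovers the complete Wick expansion of the left-hand side, yielding the claimed identity.

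The main obstacle I expect lies in the bookkeeping of this last step: matching the placement of $\mathbf{X}_{\pi(i)}^H$ inside the $i$-th block on the right with the index identifications produced by the pairing $\pi$ on the left, and checking that the single surviving contraction on the right carries exactly the same constraints on the summation indices --- and hence the same weights inherited from $\mathbf{D}$ and $\mathbf{E}$ --- as the $\pi$-indexed term on the left. Once the ordering conventions for the $\mathbf{X}$- and $\mathbf{X}^H$-factors are fixed this is routine, and the independence of the $\mathbf{X}_i$ is precisely what guarantees that each $\pi$ yields a single term rather than a further internal sum over contractions.
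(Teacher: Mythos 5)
Your proposal is correct, but it follows a genuinely different route from the paper. You expand $\E\left[\mathrm{tr}\left(({\bf D}{\bf X}{\bf E}{\bf X}^H)^p\right)\right]$ entrywise and apply the complex Wick (Isserlis) formula directly to the product of $2p$ entries of the \emph{single} matrix ${\bf X}$, obtaining a sum over the $p!$ conjugate pairings, and then observe that each right-hand term collapses to exactly one such pairing because the independence of ${\bf X}_1,\ldots,{\bf X}_p$ kills all other contractions; your remark that the surviving contraction on the right corresponds to $\pi^{-1}$ rather than $\pi$ on the left, which is immaterial after summing over $S_p$, is the right bookkeeping point to flag. The paper instead uses the symmetrization (replica) trick of Haagerup--Thorbj\o rnsen: it writes ${\bf X}=s^{-1/2}({\bf X}_1+\cdots+{\bf X}_s)$, expands into a sum over index tuples $(i_1,j_1,\ldots,i_p,j_p)$, notes that only tuples lying in some $M(\pi,s)$ contribute, restricts to tuples with distinct $i$'s to make the sets $M(\pi,s)\cap\mathcal{D}(s)$ disjoint, and lets $s\to\infty$ so that the off-diagonal remainder vanishes. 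The payoff of the paper's route is that it extracts the $S_p$-indexed decomposition while only ever needing to evaluate expectations in which each independent Gaussian matrix occurs once as ${\bf X}_i$ and once as ${\bf X}_i^H$, i.e.\ it sidesteps applying the full Wick expansion to repeated occurrences of the same matrix; the cost is the limiting argument and the cardinality count $s(s-1)\cdots(s-p+1)/s^p\to 1$. Your route is shorter and more self-contained if one takes the complex Wick formula for granted, and it yields the identity exactly rather than as an $s\to\infty$ limit; both are valid proofs of the proposition.
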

\begin{IEEEproof}
Let $\left({\bf X}_i\right)_{i\in \mathbb{N}}$ be a sequence of
independent $n\times N$ standard, complex, Gaussian matrices with entries $x(u,v,i)$, $1\leq u\leq n$, $1\leq v\leq N$. 
For any $s\in \mathbb{N}$, the matrix $s^{-1/2}\left({\bf
X}_1+\cdots+{\bf X}_s\right)$ is again $n\times N$ standard, complex, 
Gaussian. Hence, we can write
\begin{align*}
  \lefteqn{\E\left[\mathrm{tr}\left({\bf D}{\bf X}{\bf E}{\bf X}^H\right)^p\right]}
  \\
  &=\E\left\{\mathrm{tr}\left[{\bf D}\left(s^{-1/2}({\bf X}_1+\cdots+{\bf X}_s)\right)\right. \right.{\bf E}  \\
  & \qquad \left. \left.
  \left(s^{-1/2}({\bf X}_1+\cdots+{\bf X}_s)\right)^H\right]^p\right\} \\
  &=s^{-p}\sum_{1\leq i_1,j_1,\ldots ,i_p,j_p\leq s}\E\left[\mathrm{tr}\left({\bf D}{\bf X}_{i_1}{\bf E}{\bf X}_{j_1}^H\cdots {\bf D}{\bf X}_{i_p}{\bf E}{\bf X}_{j_p}^H\right)\right].
\end{align*}
Denoting by $d(i,j)$ the elements of ${\bf D}$, $e(i,j)$ the elements of ${\bf E}$, we have that
\begin{multline}\label{dE}
\E\left[\mathrm{tr}\left({\bf D}{\bf X}_{i_1}{\bf E}{\bf X}_{j_1}^H\cdots {\bf D}{\bf X}_{i_p}{\bf E}{\bf X}_{j_p}^H\right)\right]=n^{-1}\times\\
\sum_{ {1\leq u_1,u_2,\ldots,u_p\leq n \atop 1\leq v_1,v_2,\ldots,v_p\leq n} \atop {1\leq w_1,w_2,\ldots,w_p\leq N \atop 1\leq y_1,y_2,\ldots,y_p\leq N} }
 d(u_p,v_1)\cdots d(u_{p-1},v_p)\times \\
 e(w_1,y_1)\cdots e(w_p,y_p)\times \\
 E[x(v_1,w_1,i_1)\times \overline{x(u_2,y_1,j_1)} \times \cdots \\
   x(v_p,w_p,i_p)\overline{x(u_1,y_p,j_p)}],
\end{multline}
and we need only sum over conjugate pairings of the Gaussian variables, i.e. for a $\pi\in S_p$ we have
\begin{eqnarray}
  j_h          &=& i_{\pi(h)} \nonumber \\
  u_h          &=& v_{\pi(h)} \nonumber \\
  y_h          &=& w_{\pi(h)}\label{equalvariables} 
\end{eqnarray}
for all $h$.
Hence, we only have to sum over those $2$-tuples $(i_1, j_1,\ldots, i_p, j_p)$ that are in 
\begin{multline*}
M(\pi, s)=\left\{(i_1,j_1,\ldots,i_p,j_p)\in \left\{1,2,\ldots,s\right\}^{2p}|\right.  \\
 \left. j_1=i_{\pi(1)},\ldots,j_p=i_{\pi(p)} \right\}.
\end{multline*}
for some $\pi \in S_p$, i.e. 
\begin{multline*}
\E\left[\mathrm{tr}\left({\bf D}{\bf X}{\bf E}{\bf X}^H\right)^p\right]=
s^{-p}\times\\
\sum_{(i_1,j_1,\ldots ,i_p,j_p)\in \bigcup_{\pi \in S_p}M(\pi, s)}
\E\left[\mathrm{tr}\left({\bf D}{\bf X}_{i_1}{\bf E}{\bf X}_{j_1}^H\cdots {\bf D}{\bf X}_{i_p}{\bf E}{\bf X}_{j_p}^H\right)\right].
\end{multline*}
We observe that the sets $M(\pi, s)$ are not disjoint, but if we put
\begin{multline*}
\mathcal{D}(s)=\{(i_1, j_1,\ldots, i_p, j_p)\in \left\{1,2,\ldots,s\right\}^{2p}| \\
  i_1, i_2,\ldots i_p\ \mathrm{are\ distinct}\}
\end{multline*}
the sets $M(\pi, s)\cap \mathcal{D}(s)$, $\pi \in S_p$, are disjoint. Thus, we can write
\begin{multline}\label{SU}
\E\left[\mathrm{tr}\left({\bf D}{\bf X}{\bf E}{\bf X}^H\right)^p\right]= \\
s^{-p}
\hspace{-0.3cm}
\sum_{{\pi\in S_p} \atop {(i_1,j_1,\ldots ,i_p,j_p)\in M(\pi, s)\cap \mathcal{D}(s)}}
\hspace{-1cm}
\E\left[\mathrm{tr}\left({\bf D}{\bf X}_{i_1}{\bf E}{\bf X}_{j_1}^H\cdots {\bf D}{\bf X}_{i_p}{\bf E}{\bf X}_{j_p}^H\right)\right]+\\
s^{-p}
\hspace{-0.5cm}
\sum_{{\pi\in S_p} \atop {(i_1,j_1,\ldots ,i_p,j_p)\in M(\pi, s)\setminus\mathcal{D}(s)}}
\hspace{-1cm}
\E\left[\mathrm{tr}\left({\bf D}{\bf X}_{i_1}{\bf E}{\bf X}_{j_1}^H\cdots {\bf D}{\bf X}_{i_p}{\bf E}{\bf X}_{j_p}^H\right)\right].
\end{multline}
All $(i_1, j_1, \ldots, i_p, j_p)\in M(\pi, s)\cap \mathcal{D}(s)$ give the same contribution in the above sum, 
so that we can write the first term of (\ref{SU}) as
\begin{multline*}
s^{-p}\sum_{\pi\in S_p}\mathrm{card}(M(\pi, s)\cap \mathcal{D}(s))\times  \\
\E\left[\mathrm{tr}\left({\bf D}{\bf X}_{1}{\bf E}{\bf X}_{\pi(1)}^H\cdots{\bf D}{\bf X}_p{\bf E}{\bf X}_{\pi(p)}^H\right)\right].
\end{multline*}
Since the cardinality of $M(\pi, s)\cap \mathcal{D}(s)$ is equal to $s(s-1)\cdots (s-p+1)$, we have
$$\lim_{s\rightarrow \infty} s^{-p}\mathrm{card}(M(\pi, s)\cap \mathcal{D}(s)) = 1,$$
so that the first term of (\ref{SU}) tends to
$$\sum_{\pi\in S_p}\E\left[\mathrm{tr}\left({\bf D}{\bf X}_{1}{\bf E}{\bf X}_{\pi(1)}^H\cdots{\bf D}{\bf X}_p{\bf E}{\bf X}_{\pi(p)}^H\right)\right]$$
as $s\rightarrow \infty$. Observing that
\begin{align*}
& s^{-p}\mathrm{card}(M(\pi, s)\setminus\mathcal{D}(s)) \\
&=\left[ s^{-p}\mathrm{card}(M(\pi, s))-s^{-p}\mathrm{card}
    \left( M(\pi, s)\cap \mathcal{D}(s) \right)
  \right]\\
&=\left[ 1-s^{-p}\mathrm{card}
    \left( M(\pi, s)\cap \mathcal{D}(s) \right)
  \right] \longrightarrow 0,
\end{align*}
as $s\rightarrow \infty$, and summing over $\pi\in S_p$, we see that the second term in (\ref{SU}) tends to $0$, and (\ref{thesis}) follows.
\end{IEEEproof}

Theorem~\ref{teo1} will follow from Proposition~\ref{prop1}, the following proposition, 
and insertion of the additional $N^{-p}$-factor in (\ref{momentsDXX^H}):
\begin{proposition}\label{prop2}
For any positive integers $n, N$, any $\pi \in S_p$ and any ${\bf D}$ deterministic $n\times n$ matrix, we have
\begin{eqnarray}
\lefteqn{ \E\left[\mathrm{tr}\left({\bf D}{\bf X}_{1}{\bf E}{\bf X}_{\pi(1)}^H\cdots{\bf D}{\bf X}_p{\bf X}_{\pi(p)}^H\right)\right]} \nonumber \\
&=& N^{k(\rho)}n^{l(\rho)-1} D_{\rho|\text{odd}}E_{\rho|\text{even}} \label{lambda}
\end{eqnarray}
with $\rho,k(\rho),l(\rho)$ as in Definition~\ref{thisdef}.
\end{proposition}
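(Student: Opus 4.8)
The plan is to evaluate the left-hand side of (\ref{lambda}) directly from the definition of the normalized trace, exploiting that the matrices ${\bf X}_1,\ldots,{\bf X}_p$ are independent and that each index $h$ occurs exactly once as an unconjugated factor ${\bf X}_h$ and exactly once as a conjugated factor ${\bf X}_h^H$. First I would write out $\mathrm{tr}\left({\bf D}{\bf X}_1{\bf E}{\bf X}_{\pi(1)}^H\cdots{\bf D}{\bf X}_p{\bf E}{\bf X}_{\pi(p)}^H\right)$ as a sum of products of scalar entries indexed by the vertices of the circular representation of Figure~\ref{fig:geo_interp1}, exactly as in (\ref{tracewrittenout}) and (\ref{dE}), keeping the $\frac1n$ normalization explicit and separating the deterministic $d(\cdot,\cdot)$ and $e(\cdot,\cdot)$ factors from the Gaussian factors.

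Next I would apply the Wick/Isserlis rule to the Gaussian expectation. Because the ${\bf X}_h$ are independent, $\E[x(a,b,h)\overline{x(c,d,h')}]=\delta_{hh'}\delta_{ac}\delta_{bd}$, so a product of $2p$ Gaussian factors has nonzero expectation only when each ${\bf X}_h$-entry is contracted against an ${\bf X}^H$-entry carrying the same matrix index. Among the $p!$ a priori possible contractions, the factor $\delta_{hh'}$ kills all but the one matching each ${\bf X}_h$ with the unique ${\bf X}_h^H$; unlike the situation for a single repeated Gaussian matrix (where every contraction survives and produces the full sum over $S_p$), here the single-occurrence structure makes the contraction unique, and the expectation reduces to a single product of Kronecker deltas. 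These deltas identify the two endpoints of the edge labelled $j$ with the two endpoints of the edge labelled $\hat\pi(j)$, which is precisely the content of the generating relations (\ref{equivalence2}); hence the surviving constraints on the vertices are exactly the equivalence relation $\rho=\rho(\pi)$ of Definition~\ref{thisdef}.

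I would then carry out the remaining summation over the vertices. After the identifications, the summation ranges over one free index per block of $\rho$: an odd block runs over the $n$ row-indices and an even block over the $N$ column-indices. Since the $d(\cdot,\cdot)$ factors involve only odd ($n$-valued) vertices and the $e(\cdot,\cdot)$ factors only even ($N$-valued) vertices, the sum factors into a product of non-normalized traces of products of ${\bf D}$, one for each odd block of $\rho$, times a product of non-normalized traces of products of ${\bf E}$, one for each even block. Rewriting each non-normalized trace in terms of the normalized traces of Definition~\ref{ddef2} produces one factor of $n$ per odd block and one factor of $N$ per even block, i.e. $n^{l(\rho)}N^{k(\rho)}D_{\rho|\text{odd}}E_{\rho|\text{even}}$; combined with the overall $\frac1n$ from the trace this yields $N^{k(\rho)}n^{l(\rho)-1}D_{\rho|\text{odd}}E_{\rho|\text{even}}$, which is (\ref{lambda}).

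The main obstacle will be the bookkeeping in the middle step: verifying that the Kronecker deltas produced by the unique Wick contraction are generated by exactly the relations (\ref{equivalence2}) and no others, and that the deterministic factors then close up into cycles indexed precisely by $\rho|\text{odd}$ and $\rho|\text{even}$. This amounts to tracking endpoints around the circle carefully; once it is established, the power counting and the final normalization follow immediately from counting the blocks of $\rho$.
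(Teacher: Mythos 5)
Your argument is correct and follows essentially the same route as the paper's: the independence of ${\bf X}_1,\ldots,{\bf X}_p$ forces a unique Wick contraction, the resulting Kronecker deltas generate exactly the relation $\rho(\pi)$ of Definition~\ref{thisdef}, and the deterministic factors close into one non-normalized trace per block of $\rho$, whose renormalization produces the factors $N^{k(\rho)}n^{l(\rho)-1}$. One wording slip worth fixing: after the identifications there is one free summation index per \emph{vertex} (i.e.\ per element of a block), not one per block --- a block of cardinality $m$ carries $m$ chained indices whose sum is $\mathrm{Tr}$ of an $m$-fold product --- but your subsequent (correct) statement that each block contributes a single non-normalized trace is what actually drives the power count, exactly as in the paper.
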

\begin{IEEEproof}
Inserting (\ref{equalvariables}) into (\ref{dE}) we obtain
\begin{multline*}
\E\left[\mathrm{tr}\left(\left({\bf D}{\bf X}{\bf E}{\bf X}^H\right)^p\right)\right]=n^{-1}\times\\
\sum_{\pi\in S_p} \sum_{ {1\leq v_1,v_2,\ldots,v_p\leq n} \atop {1\leq y_1,y_2,\ldots,y_p\leq N} }
 d(v_{\pi(p)},v_1)\cdots d(v_{\pi(p-1)},v_p)\times \\
 e(w_{\pi^{-1}(1)},y_1)\cdots e(w_{\pi^{-1}(p)},y_p)\\
 =n^{-1}\sum_{\pi\in S_p}
 \left(\sum_{1\leq v_1,v_2,\ldots,v_p\leq n} d(v_{\pi(p)},v_1)\cdots d(v_{\pi(p-1)},v_p)\right)\\
 \times\left(\sum_{1\leq y_1,y_2,\ldots,y_p\leq N} e(y_{\pi^{-1}(1)},y_1)\cdots e(y_{\pi^{-1}(p)},y_p)\right).
\end{multline*}
The result will follow from analyzing the terms in this expression. 

$\rho$ restricted to the even numbers is generated by the relations 
\[
  2j\sim 2\pi(j), \quad j\in\{1,\ldots,p\}.
\]
Mapping even numbers $\leq 2p$ onto $\{ 1,\ldots,p\}$, this is equivalent to $j\sim \pi(j)$, $j\in\{ 1,\ldots,p\}$, i.e., 
the blocks consisting of even numbers are in one-to-one correspondence with the cycles of $\pi$. 
From this it follows that 
\begin{equation} \label{firstsum}
  \sum_{1\leq y_1,y_2,\ldots,y_p\leq N} e(y_{\pi^{-1}(1)},y_1)\cdots e(y_{\pi^{-1}(p)},y_p) = N^{k(\rho)} {\bf E}_{\rho|\text{even}},
\end{equation}
since the matrix indices follow the cycle structure of $\pi$. 
Here $N^{k(\rho)}$ comes from the fact that the summand is a product of $k(\rho)$ non-normalized traces of $N\times N$-matrices.

$\rho$ restricted to the odd numbers is generated by the relations
\[
  2j-1 \sim 2\pi^{-1}(j)+1.
\]
Mapping odd numbers $\leq 2p$ onto $\{ 1,\ldots,p\}$, this is equivalent to $j\sim \pi^{-1}(j)+1$, $j\in\{ 1,\ldots,p\}$. From this it follows that 
\begin{equation} \label{secondsum}
  \sum_{1\leq v_1,v_2,\ldots,v_p\leq n} d(v_{\pi(p)},v_1)\cdots d(v_{\pi(p-1)},v_p) = n^{l(\rho)} {\bf D}_{\rho|\text{odd}}
\end{equation}

The result now follows by inserting (\ref{firstsum}) and (\ref{secondsum}).

\end{IEEEproof}

\section{The proof of Theorem~\ref{teo2}} \label{proofteo2}
Since only conjugate pairings of Gaussian variables contribute, we need only consider partial permutations.
The contribution from the partial permutation $\pi=\pi(\rho_1,\rho_2,q)$ can be written
\begin{multline*}
n^{-1}\times\sum_{ {1\leq v_1,v_2,\ldots,v_p\leq n} \atop {1\leq w_1,w_2,\ldots,w_p\leq N} }
\prod_{i\in\rho_2^c} d(v_i,w_i) \prod_{i\in\rho_1^c} \overline{e(v_{i+1},w_i)} \times \\
\E\left[x(v_{\rho(1)},w_{\rho(1)},\rho(1))\overline{x(v_{\rho(1)+1},w_{\rho(1)},\rho_2(q(1)))}\times \cdots \right.\\
\left. x(v_{\rho_1(|\rho_1|)},w_{\rho_1(|\rho_1|)},\rho_1(|\rho_1|))\overline{x(v_{\rho_1(|\rho_1|)},w_{\rho_1(|\rho_1|)},\rho_2(q(|\rho_1|)))}\right].
\end{multline*}
Note that if $2k-1,2k\in {\cal D}$ (i.e. the first relation (\ref{eq1}) generating $\sigma$), so that $k\in\rho_1^c\cap\rho_2^c$, 
we find $d(v_k,w_k)\overline{e(v_{k+1},w_k)}$ as a part in the matrix product above, which is a part of the matrix product ${\bf D}{\bf E}^H$.
Similarly, if $2k,2k+1\in {\cal D}$, we find a part of the matrix product ${\bf E}^H{\bf D}$.

On the other hand, if $2k-1,2l\in{\cal D}$ with $(2k-1)+1=2k\sim_{\rho} 2l$ (i.e. the second relation (\ref{eq2}) generating $\sigma$),
we find that $w_k=w_l$ as in Appendix~\ref{proofteo1}, so that we find $d(v_k,w_k)\overline{e(v_{l+1},w_k)}$ as a part in the matrix product, 
which again is a part of the matrix product ${\bf D}{\bf E}^H$. We can reason similarly when $k$ and $l$ swap roles, 
to find a part of the matrix product ${\bf E}^H{\bf D}$. 

In conclusion, the relations (\ref{eq1}) and (\ref{eq2}) reflect a cyclic product of the deterministic elements, 
the length of the product equaling the number of elements in the corresponding block of $\sigma$.
Moreover, it is clear that the $\bf D$ and ${\bf E}^H$ appear in alternating order in the corresponding matrix product. In particular, 
all blocks of $\sigma$ have even cardinality.
The matrix product constitutes a non-normalized trace.
Thus, if $\sigma_i$ is the $i$'th block in $\sigma$, $|\sigma_i|$ is even, and the matrix product of the deterministic elements is
\begin{equation} \label{contribsigma}
  \prod_i \mathrm{Tr}(({\bf D}{\bf E}^H)^{|\sigma_i|/2})
  =
  n^{|\sigma|} \prod_i \mathrm{tr}(({\bf D}{\bf E}^H)^{|\sigma_i|/2}) ).
\end{equation}
(\ref{contribsigma}), which is seen to be the last term in (\ref{genformula}), 
thus contributes in $\mathrm{tr}((({\bf D}+{\bf X})({\bf E}+{\bf X})^H)^p)$.
The other terms in (\ref{genformula}) are identified as follows:
\begin{itemize}
  \item the first $n$ in the first term $\frac{1}{nN^{|\rho_1|}}$ comes from taking the trace,
    while $N^{|\rho_1|}$ comes from the normalizing factor for the Gaussian terms (the normalizing factors for the deterministic terms were absorbed in their definition).
  \item $N^{k(\rho)-kd(\rho)}$ corresponds to the number of all the
    choices of blocks of $\rho$ with even numbers only, 
    which do not intersect ${\cal D}\cup({\cal D}+1)$,
  \item $n^{l(\rho)-ld(\rho)}$ corresponds to the number of all the choices of blocks of $\rho$ with odd numbers only, 
    which do not intersect ${\cal D}\cup({\cal D}+1)$,
\end{itemize}

\bibliography{mybib,mainbib}

\end{document}